\documentclass[11pt]{article}
\usepackage[T1]{fontenc}
\usepackage{amsfonts,amssymb,amsthm,amsmath,epsfig}
\usepackage{stmaryrd}
\usepackage{enumitem}
\usepackage{multirow}

\RequirePackage{graphicx}
\usepackage{subcaption}
\graphicspath{ {./graphs/} }

\usepackage{algorithm}
\usepackage{algpseudocode}
\usepackage{pdfpages}
\usepackage{blkarray}
\usepackage{bbm}
\usepackage{float}
\usepackage[normalem]{ulem}

\makeatletter
\newenvironment{breakablealgorithm}
  {
   \begin{center}
     \refstepcounter{algorithm}
     \hrule height.8pt depth0pt \kern2pt
     \renewcommand{\caption}[2][\relax]{
       {\raggedright\textbf{\ALG@name~\thealgorithm} ##2\par}%
       \ifx\relax##1\relax 
         \addcontentsline{loa}{algorithm}{\protect\numberline{\thealgorithm}##2}%
       \else 
         \addcontentsline{loa}{algorithm}{\protect\numberline{\thealgorithm}##1}%
       \fi
       \kern2pt\hrule\kern2pt
     }
  }{
     \kern2pt\hrule\relax
   \end{center}
  }
\makeatother

\newtheorem{prop}{Proposition}
\newtheorem{corollary}{Corollary}


\newcommand{\bqn}{\begin{eqnarray}}
\newcommand{\eqn}{\end{eqnarray}}
\newcommand{\bq}{\begin*{eqnarray}}
\newcommand{\eq}{\end*{eqnarray}}
\newcommand{\tdots}{\cdot\cdot\cdot}
\DeclareMathSymbol{\sminus}{\mathbin}{AMSa}{"39}

\usepackage{url}
\usepackage{hyperref}
\hypersetup{colorlinks,%
citecolor=blue,%
filecolor=blue,%
linkcolor=red,%
urlcolor=blue,%
}

\def\B{{\bf B}}

\def\I{{\bf I}}
\def\R{{\bf R}}

\def\Q{{\bf Q}}
\def\P{{\bf P}}
\def\U{{\bf U}}

\def\U{{\bf U}}

\def\calC{{\cal C}}

\def\calH{{\cal H}}
\def\calL{{\cal L}}
\def\calP{{\cal P}}
\def\calK{{\cal K}}

\def\calB{{\cal B}}
\def\calZ{{\cal Z}}

\def\bl{\boldsymbol\lambda}
\def\bL{\boldsymbol\Lambda}

\def\bpsi{\boldsymbol\psi}

\def\0{{\bf 0}}

\def\squarebox#1{\hbox to #1{\hfill\vbox to #1{\vfill}}}

\def\bse{\begin{eqnarray*}}
\def\ese{\end{eqnarray*}}
\def\be{\begin{eqnarray}}
\def\ee{\end{eqnarray}}
\def\bsq{\begin{equation*}}
\def\esq{\end{equation*}}
\def\bq{\begin{equation}}
\def\eq{\end{equation}}

\def\sumIP1{\sum_{i=1, i\in P_1}^N}

\begin{document}

\date{}
\title{Discrete Heat Kernels on Simplicial Complexes and Its Application to Functional Brain Networks}

\author{
Sixtus Dakurah\\
University of Wisconsin-Madison\\
\textit{sdakurah@wisc.edu}
}

\maketitle 

\begin{abstract}
Networks constitute fundamental organizational structures across biological systems, although conventional graph-theoretic analyses capture exclusively pairwise interactions, thereby omitting the intricate higher-order relationships that characterize network complexity. This work proposes a unified framework for heat kernel smoothing on simplicial complexes, extending classical signal processing methodologies from vertices and edges to cycles and higher-dimensional structures. Through Hodge Laplacian, a discrete heat kernel on a finite simplicial complex $\mathcal{K}$ is constructed to smooth signals on $k$-simplices via the boundary operator $\partial_k$. Computationally efficient sparse algorithms for constructing boundary operators are developed to implement linear diffusion processes on $k$-simplices. The methodology generalizes heat kernel smoothing to $k$-simplices, utilizing boundary structure to localize topological features while maintaining homological invariance. Simulation studies demonstrate qualitative signal enhancement across vertex and edge domains following diffusion processes. Application to parcellated human brain functional connectivity networks reveals that simplex-space smoothing attenuates spurious connections while amplifying coherent anatomical architectures, establishing practical significance for computational neuroscience applications.\\

{\small\noindent{\bf Keywords:} Heat Kernel Smoothing, Functional Brain Networks, Hodge Laplacian, Topological Signal Processing}
\end{abstract}

\section{Introduction}

Networks embeds the interactions of elements in a physical or virtual space. Graphs are often used to represent and describe the geometric structures of network data, and in their basic form are composed of nodes or data points in which for any two pair of nodes, there exists a possible edge between them governed by a criteria, such as correlation. The nodes and edges in a graph are often endowed with a weighting function that defines edge weights. These weights depict \textit{signals} on the graph. Modeling signals on graphs falls under the common research paradigm of \textit{graph signal processing} \cite{newman2018networks,ortega2018graph}, in which classical signal processing methodologies are extended to irregular data structures such as Graphs.

\textit{Graph signals} are ubiquitous in many biological phenomena. In brain imaging, non-invasive procedures have been devised to segregate functional regions of the human cerebral cortex which are interconnected by a dense network of cortico-cortical axonal pathways \cite{hagmann2008mapping, macdonald2000automated}. 
Anatomical surfaces, extracted from magnetic resonance images (MRI) and computed tomography (CT) are usually represented as triangle meshes, a special case of graphs \cite{chung2015unified}. Such mesh data are usually noisy. In such graph data, measurements such as cortical thickness and surface curvature are obtained at mesh vertices, which form the nodes of the graph. In neuroscience, the interconnectivity of the  axonal pathways can in themselves be represented by a graph object \cite{bullmore2009complex}.

Often signal processing techniques are applied to such data. Processing graph signals takes various forms; smoothing, denoising, filtering among others. It has had quite successful applications in neuroimaging. Diffusion wavelets have been applied to characterize the localized growth patterns of mandible surfaces obtained in CT images, while graph signal processing can be applied to filter brain activity based on concepts of spectral modes derived from brain structure, and can be extended to brain imaging classification task \cite{hu2016matched,huang2018graph}.

Despite this success, signal processing on graph structures have the limitation that, it captures only pairwise interactions. Many physical and biological systems encodes more complex structures beyond pairwise interactions. Few studies in neuroimaging and biological systems have explored higher-order interactions in networks \cite{dakurah2022modelling,dakurahregistration,ganmor2011sparse,ohiorhenuan2010sparse,yu2011higher}. The majority of methods try to extend and generalize the pairwise relationships to group-level interactions. However, this generalization in itself is limiting, as there are not explicit in the representation of the true underlying structure of higher dimensional networks.

\begin{figure*}[ht]
 \centering
 \includegraphics[width =\textwidth]{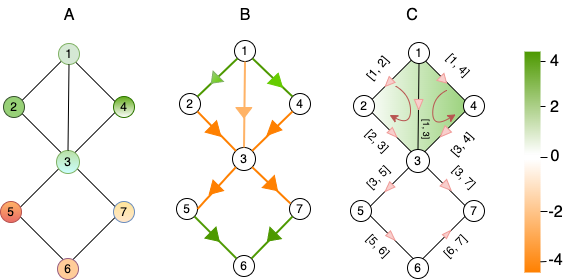}
 \caption{A schematic of signals on a simplicial complex of different orders. \textbf{(A)} Signals on 0-simplices. \textbf{(B)} Signals on 1-simplices and \textbf{(C)} Signals on 2-simplices $\{ [1, 2], [2, 3] , [1, 2]\}, \{ [1, 3], [1, 4], [3, 5] \}$. Arrows depicts the chosen orientation.}
 \label{fig:signals-on-complexes}
\end{figure*}

Simplicial complexes are algebraic objects often studied in persistent homology (PH) and topological data analysis (TDA) as the basic building block for representing complex data (see Section~\ref{sec:methods} for more details). Simplicial complex representation has been extensively applied to brain network and neuroimaging task. The encoding of the structural and functional connections of brain neurons using simplicial complexes has been studied \cite{andjelkovic2020topology,bullmore2009complex,dakurah2022modelling}, and methodologies developed for exploring their statistical properties in this representation \cite{dakurah2025maxtda,giusti2016two}. Signal processing on higher order simplices, unlike their lower order counterparts like graphs is not that straightforward. Recognizing the notion of flow in edges (Figure~\ref{fig:signals-on-complexes}), which is non-existent in signal processing over vertices, the concept of Edge-Laplacian, a special case of the Hodge-Laplacian was devised to study signals over edges \cite{schaub2018flow}. A general framework for signal processing over complexes was established, with methods developed to handle only the case of signals defined over edges , with a similar exposition on simplicial complexes as a modeling framework for higher-order network interactions, signal smoothing and denoising via Fourier transform \cite{barbarossa2020topological,schaub2018flow}.  Detail methods and analysis, however, were limited to the special case of $1$-simplices.

A popular algebraic tool for representing topological features in high dimensions is the persistent homology (PH)\cite{edelsbrunner2008persistent}. This allows for the computation of topological features of a space (represented as a simplicial complex) at different spatial resolutions. Usually this computation is preceded by imposing a function on the underlying space, which acts on the space to generate a nested sequence of simplicial complexes. The resulting nested sequence of high-order networks is referred to as a \textit{filtration}.  In medical imaging and other biological applications, PH has been applied to establish algebraic and statistical properties of various topological features in brain networks, establish statistical methods for studying electroencephalographic signals, analyzing breathing sound signals, and arterial pulse pressure waves among others \cite{emrani2014persistent,emrani2015novel}. The central theme in all these works is to develop inference procedures on topological features that persists over wide spatial scales, which are likely not to represent noise. One topological invariant that is used to distinguish topological spaces is connectedness. This invariant represents the integration of brain networks. Cycles, a second topological invariant measures the strength of this integration. Whiles brain connectivity modeled as 1-skeleton has been extensively studied in the literature\cite{chung2019statistical,park2013structural}, cycles has not enjoyed the same exploration mainly due to the fact that it's difficult to compute even for the simple case of 1-skeletons \cite{chung2019statistical,dakurahregistration}. Computing cycles in 1-skeletons is often done using brute force approach \cite{tarjan1972depth}. More recently, a scalable algorithm for computing cycles in 1-skeletons and a method for computing and identifying cycles in high-order topological spaces via the Hodge Laplacian has been developed \cite{anand2021hodge,dakurah2022modelling}. This provides an efficient algorithm for extracting cycles that retains the simplicity of graph-based approaches.

The contributions of this work are fourfold: (1) A unified heat kernel framework for signal smoothing on simplices of arbitrary dimension within finite simplicial complexes, extending beyond conventional vertex and edge processing through the Hodge Laplacian. (2) A novel cycle-preserving diffusion methodology that smooths signals along $k$-cycle representatives while maintaining homology class, enabling topological feature denoising without structural destruction. (3) Scalable algorithmic implementations leveraging sparse matrix operations that circumvent computational bottlenecks of traditional persistent homology pipelines. (4) Neurobiological validation through resting-state fMRI connectivity networks, demonstrating enhanced group-level contrasts and anatomically meaningful pattern extraction while preserving network topology.

The rest of this work is structured as follows. Section~\ref{sec:methods} presents the theoretical basis of the heat kernel smoothing method, along with algorithms for sparse computation of boundary operators, and experimental illustrations of the diffusion process. Section~\ref{sec:app} presents an application of the methodology to functional brain networks of humans, and Section~\ref{sec:disc} concludes with discussion and directions for future work.

\section{Methods}\label{sec:methods}
The mathematical and computation framework pertinent to application of simplex signal processing through Hodge theory to large scale brain networks is presented in this section.

\subsection{Graphs} A graph is an ordered set $\mathcal{X} = \left(V(\mathcal{X}), E(\mathcal{X}), \psi_{\mathcal{X}} \right)$, where $V(\mathcal{X})$ is the set of nodes, $E(\mathcal{X})$ is the set of edges and $\psi_{\mathcal{X}}$ is an incidence function connecting any pair of nodes of $V(\mathcal{X})$ to form edges $E(\mathcal{X})$. When the underlying graph is unambiguous from the context we usually denote the node set and vertex set as $V$ and $E$ respectively. The number of nodes and edges are also denoted as $v(\mathcal{X})$ (also denoted as p) and $e(\mathcal{X})$ respectively. Two nodes incident by the same edge are said to be adjacent, similarly two edges incident by the same node are also adjacent \cite{bondy1991graph}. They are two important matrices associated with graphs; the adjacency matrix $\textbf{A}(\mathcal{X})$ which is a ${v(\mathcal{X})\times v(\mathcal{X})}$ matrix with the $[a_{ij}]$ entry been the number of edges connecting node $v_i$ and $v_j$, and the incidence matrix $\textbf{B}(\mathcal{X})$ which is a $v(\mathcal{X})\times e(\mathcal{X})$ matrix with the $[b_{ij}]$ entry been the number of times node $v_i$ is incident with edge $e_j$, with loops contributing two (2) \cite{bondy1991graph,gross2005graph}. $\textbf{B}(\mathcal{X})$ is an alternate form of depicting a graph and we can derive the degree of a node by taking the appropriate column sum of $\textbf{A}(\mathcal{X})$.

A graph is complete if any pair of distinct nodes is incident with an edge. A form of graph pertinent to our application is the weighted graph $G = (V, E, w)$ where $V$ and $E$ are the usual node and edge sets and $w$ is a set of numerical quantities representing some sort of weight between two nodes. A binary weighted graph denoted $\mathcal{X}_\epsilon = (V, w_{\epsilon})$ is a modification of weighted graph that assigns the binary weight $w_{\epsilon, ij} = 1{(w_{ij}> \epsilon)}$ to nodes $v_i$ and $v_j$, for some threshold value $\epsilon$.

\subsection{Simplicial Complex and Simplicial k-chains:} A k-simplex is a k-dimensional polytope whose convex hull is made up of its k+1 nodes. A simplicial complex $\mathcal{K}$ is a set of simplices such that for any $\sigma_1, \sigma_2 \in \mathcal{K}$, $\sigma_1 \cap \sigma_2$ is a face of both simplices; and a face of any simplex $\sigma \in \mathcal{K}$ is also a simplex in $\mathcal{K}$.
For some $r_i \in \mathbb{R}$, the the formal sum $\sum_{i=1}^N r_i \sigma_i$, $\sigma_i \in \mathcal{K}$ is termed the simplicial k-chain. The set of simplicial k-chains with formal addition over $\mathbb{R}$ is known as an $\mathbb{R}$-module, and is used to construct boundary maps \cite{topaz2015topological}.

\subsubsection{Boundary Map of Simplicial k-chains:} Take any two $\mathbb{R}$-modules $\mathcal{K}_k$ and $\mathcal{K}_{k-1}$, the boundary map 

\begin{equation}
    \partial_k(\sigma): \mathcal{K}_k \longrightarrow \mathcal{K}_{k-1} 
    \label{eqn:init_boundary_map}
\end{equation}
\noindent
for each k-simplex $\sigma = (v_0, ..., v_k)$ is given as: 
\begin{equation}
    \partial_k(\sigma) = \sum_{i=1}^N(-1)^i(v_0, ..., \hat{v_i}, ..., v_k)
    \label{eqn:boundary_map}
\end{equation}
where $(v_0, ..., \hat{v_i}, ..., v_k)$ is the k-1 face of $\sigma$ obtained by deleting the $v_i$ node. The boundary map satisfies the property that $\partial_k o \partial_{k+1} = 0$.

To illustrate the connection between the simplicial k-chain and the boundary map, consider the following 3-simplices: $\sigma_i = \{v_{0i}, v_{2i}, v_{3i}, v_{4i}\}$ for $i=1,2,3$. We can construct a simplicial 3-chain by taking the sum $\sum_{i=1}^N r_i \sigma_i$ for some finite N. A sample collection of such a 3-chain can have the form $\mathcal{K}_3 =  \{r_1\sigma_1 + r_3\sigma_3, r_1\sigma_1 + r_2\sigma_2, r_2\sigma_2 + r_3\sigma_3 \}$. To show that this is an $\mathbb{R}$-module, it suffice to show that it admits addition over the ring. Clearly, if we pick any two in this collection, it will be defined over addition.

\subsubsection{Homology of a Simplicial Complex $\mathcal{K}$:} Two important components of the boundary map (\ref{eqn:boundary_map}) are it's kernel denoted $\mathcal{Z}_k = ker(\partial_k)$, and it's image denoted $\mathcal{B}_{k} = img(\partial_{k+1})$. These are subspaces of $\mathcal{K}_k$. The elements of $\mathcal{Z}_k$ are known as $k$-cycles and that of $\mathcal{B}_{k}$ are known as $k$-boundaries \cite{hatcher2002algebraic}. In essence $\mathcal{Z}_k$ is the subspace of  $\mathcal{K}_k$ consisting of $k$-chains that are also $k$-cycles and $\mathcal{B}_{k}$ is a subspace of $\mathcal{K}_k$ consisting of $k$-cycles that are also $k$-boundaries.

The set quotient $\mathcal{H}_k = \mathcal{Z}_k/\mathcal{B}_{k}$ is termed the $kth$ homology-module and its an $\mathbb{R}$-module. It can be shown that $\mathcal{B}_{k} \subseteq \mathcal{Z}_k$ \cite{hatcher2002algebraic,topaz2015topological}. Hence the difference, can intuitively be thought of as the failure of k-cycles in $\mathcal{K}$ to bound $(k+1)$-simplices, leading to the concept of "holes". A measure of this number of $k$-dimensional holes is termed the Betti number denoted $\beta_{k} = rank(\mathcal{H}_k)$.

 \subsection{Homology of Graphs:} Graphs have natural relationship with complexes. An unoriented graph can be viewed as a 1-dimensional simplicial complex, and higher dimensional simplicial complexes are referred to as hypergraphs \cite{lee2012persistent}.
 Consider the following characterization of a graph $\mathcal{X} = (V, E, w)$. Define $\mathcal{V}$ and $\mathcal{E}$ to be the node space and edge space of the graph. The formal sum over the node space
 
\begin{equation}
    \sum_{v\in \mathcal{V}} z_vv, \hspace{1cm} z_v \in\mathbb{Z}
    \label{eqn:hmg_ochains}
\end{equation}
where $z_v = 0$ for all but a finite $v\in \mathcal{V}$. Note that this sum is a combination of 0-simplices (nodes) and is a 0-chain denoted $\mathcal{K}_0\mathcal{(X)}$. Similarly, a formal sum over the directed edge space (combination of 1-simplices) will give the 1-chain denoted $\mathcal{K}_1\mathcal{(X)}$. Similar to the construction of homology groups for simplicial complexes, we can define the boundary map
 
 $$\partial_1: \mathcal{K}_1\mathcal{(X)} \longrightarrow \mathcal{K}_{0}\mathcal{(X)}$$

Since $img(\partial_2) = 0$ \cite{sunada2012topological}, we have that the first homology module $\mathcal{H}_1\mathcal{(X)} = \ker(\partial_1)$. If $\mathcal{X}$ is finite, then $\mathcal{H}_1\mathcal{(X)}$ has finite rank, given by the first Betti number $\beta_1$. Elements of $\mathcal{H}_1\mathcal{(X)}$ are termed 1-cycles.

\begin{prop}
    The number of loops (or 1-cycles) in a complete graph $\mathcal{X}$ with $p$ nodes is $\frac{1}{2}p^2 -\frac{3}{2}p + 1$.
\end{prop}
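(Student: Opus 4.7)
The plan is to compute $\beta_1$ for the complete graph $K_p$ using the rank--nullity theorem applied to the boundary map $\partial_1 : \mathcal{K}_1(\mathcal{X}) \to \mathcal{K}_0(\mathcal{X})$. Since the preceding discussion establishes that $\mathcal{H}_1(\mathcal{X}) = \ker(\partial_1)$ and the number of $1$-cycles is $\beta_1 = \mathrm{rank}\,\mathcal{H}_1(\mathcal{X})$, it suffices to compute $\dim \ker(\partial_1) = \dim \mathcal{K}_1(\mathcal{X}) - \mathrm{rank}(\partial_1)$.

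First I would record the two easy ingredients. The chain group $\mathcal{K}_1(\mathcal{X})$ is freely generated by the edges of $K_p$, so $\dim \mathcal{K}_1(\mathcal{X}) = \binom{p}{2} = \tfrac{1}{2}p(p-1)$. The $0$-chain group $\mathcal{K}_0(\mathcal{X})$ is freely generated by the $p$ vertices, so $\dim \mathcal{K}_0(\mathcal{X}) = p$.

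The main work is identifying $\mathrm{rank}(\partial_1)$. From the definition of the boundary map, $\partial_1$ sends each oriented edge $(v_i, v_j)$ to $v_j - v_i$, so every element in the image of $\partial_1$ has vertex coefficients summing to zero. Conversely, I would argue that any $0$-chain whose coefficients sum to zero lies in the image: since $K_p$ is connected, one can pick a spanning tree and construct, for any such balanced $0$-chain, an explicit $1$-chain on the tree that maps onto it (a standard telescoping argument along tree paths). This shows $\mathrm{img}(\partial_1)$ is exactly the codimension-one subspace of $\mathcal{K}_0(\mathcal{X})$ defined by the vanishing of the coefficient sum, hence $\mathrm{rank}(\partial_1) = p - 1$.

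Combining the pieces gives
\begin{equation*}
\beta_1 = \dim \mathcal{K}_1(\mathcal{X}) - \mathrm{rank}(\partial_1) = \tfrac{1}{2}p(p-1) - (p-1) = \tfrac{1}{2}p^2 - \tfrac{3}{2}p + 1,
\end{equation*}
which is the claimed formula. The only nontrivial step is the characterization of $\mathrm{img}(\partial_1)$; this relies crucially on the connectedness of $K_p$ (for $p \ge 1$), and would be the main obstacle if one tried to generalize to disconnected graphs, where the codimension of the image equals the number of connected components.
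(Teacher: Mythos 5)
Your proof is correct, but it follows a different route from the paper. The paper's argument is a two-line application of the Euler--Poincar\'e identity: it cites $\chi(\mathcal{X}) = |V| - |E| = \beta_0 - \beta_1$ as a known fact, notes that completeness gives connectedness so $\beta_0 = 1$, plugs in $|V| = p$ and $|E| = \tfrac{1}{2}p(p-1)$, and rearranges. You instead work directly with the chain complex: you compute $\dim \mathcal{K}_1 = \binom{p}{2}$, identify $\mathrm{img}(\partial_1)$ as the sum-zero hyperplane in $\mathcal{K}_0$ (surjectivity onto that hyperplane coming from connectedness via a spanning-tree/telescoping argument), conclude $\mathrm{rank}(\partial_1) = p-1$, and apply rank--nullity together with $\mathcal{H}_1 = \ker(\partial_1)$ (valid here since $\mathrm{img}(\partial_2)=0$ for a graph). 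The two arguments are of course equivalent---the Euler characteristic identity is itself rank--nullity summed over the chain complex---so your proof essentially reproves the relevant instance of the identity rather than quoting it. What your version buys is self-containedness and transparency about where connectedness enters: the codimension of $\mathrm{img}(\partial_1)$ is exactly $\beta_0$, which is why your closing remark about disconnected graphs (replace $p-1$ by $p-\beta_0$) is the right generalization. What the paper's version buys is brevity, at the cost of leaning on a cited black box. One cosmetic point: the sign convention for $\partial_1$ (whether an oriented edge maps to $v_j - v_i$ or $v_i - v_j$) does not affect your image computation, so no adjustment is needed there.
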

\noindent
\begin{proof}
Let the complete graph $\mathcal{X}$ has the form $\mathcal{X} = (V, E, w)$. Since $\mathcal{X}$ is complete, it's also connected with 1 connected component. The $0th$ Betti number $\beta_0$ is the number of connected components of $\mathcal{X}$. From the discussion in homology of graphs, the number of 1-cycles will be given by the $1st$ Betti number $\beta_1$.
The Euler characteristic function of $\mathcal{X}$ is given by $\chi(\mathcal{X}) = |V| - |E| = \beta_0 - \beta_1$ \cite{adler2010persistent}.
\noindent
The graph $\mathcal{X}$ is complete $\implies$ $|V| = p$, $|E| = \frac{|V|(|V|-1)}{2} = \frac{p(p-1)}{2}$. Hence 

    $$B_1 = 1 - p + \frac{p(p-1)}{2} = \frac{1}{2}p^2 -\frac{3}{2}p + 1 = \mathcal{O}(p^2)$$
\end{proof}

\subsubsection{Graph Filtration} Given ordered thresholds (filtration values) ${{\epsilon}_0} < ... < {{\epsilon}_k}$, a filtration of the graph $\mathcal{X}$ is a collection of a sequence of nested binary networks:

\begin{equation}
    \mathcal{X}_{{\epsilon}_0} \supset ... \supset \mathcal{X}_{{\epsilon}_k}
    \label{eqn:binary_network}
\end{equation}

Intuitively, since $\mathcal{X}_0$ is complete, the number of connected components is therefore 1, as we increase threshold $\epsilon$, more edges are disconnected, increasing the number of connected components $(\beta_0)$, and decreasing the number of cycles $(\beta_1)$. More formally, it has been proved that in a graph $\mathcal{X}$, Betti numbers $\beta_0$ and $\beta_1$ are monotone over graph filtration on edge weights \cite{songdechakraiwut2020topological}.

\begin{prop}
    The number of loops (or 1-cycles) in a binary graph $\mathcal{X}_\epsilon$ is $$\beta_1(\mathcal{X}_{\epsilon}) = \beta_0(\mathcal{X}_{1{(w_{ij} > \epsilon)}}) - |V| + f(|V|, \epsilon)$$
\end{prop}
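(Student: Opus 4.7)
The plan is to reduce the claim to the Euler--Poincar\'e characteristic identity already used in the previous proposition, now applied to the thresholded binary graph $\mathcal{X}_\epsilon$ rather than to the complete graph $\mathcal{X}_0$. First I would note that $\mathcal{X}_\epsilon$ is a finite $1$-dimensional simplicial complex on the same $|V|$ nodes, with edge set $E(\mathcal{X}_\epsilon) = \{(v_i, v_j) : w_{ij} > \epsilon\}$. As observed in the homology-of-graphs subsection, since $\mathcal{X}_\epsilon$ contains no $2$-simplices we have $img(\partial_2) = 0$, so $\mathcal{H}_1(\mathcal{X}_\epsilon) = \ker(\partial_1)$ and $\beta_1(\mathcal{X}_\epsilon)$ is recovered purely from $\partial_1$.

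The main step is then to invoke the Euler characteristic identity $\chi(\mathcal{X}_\epsilon) = |V| - |E(\mathcal{X}_\epsilon)| = \beta_0(\mathcal{X}_\epsilon) - \beta_1(\mathcal{X}_\epsilon)$, valid because $\mathcal{X}_\epsilon$ is $1$-dimensional. Solving for $\beta_1$ gives
\begin{equation*}
\beta_1(\mathcal{X}_\epsilon) = \beta_0(\mathcal{X}_\epsilon) - |V| + |E(\mathcal{X}_\epsilon)|.
\end{equation*}
Because $\mathcal{X}_\epsilon \equiv \mathcal{X}_{1(w_{ij} > \epsilon)}$ by the definition of the binary graph, the leading term is exactly $\beta_0(\mathcal{X}_{1(w_{ij}>\epsilon)})$, and the last term is what the statement labels $f(|V|, \epsilon)$. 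I would then make this identification explicit by writing $f(|V|, \epsilon) := |E(\mathcal{X}_\epsilon)| = \sum_{i<j} 1(w_{ij} > \epsilon)$, the number of vertex pairs whose weight exceeds the threshold.

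The main obstacle here is interpretive rather than topological: the quantity $f(|V|,\epsilon)$ as written in the statement is not genuinely a function of $|V|$ and $\epsilon$ alone, since it also depends on the weight matrix of the underlying weighted graph $\mathcal{X}$. Making this dependence explicit via the indicator sum completes the identification. As a sanity check, specialising to $\epsilon = 0$ (so that $\mathcal{X}_\epsilon = \mathcal{X}_0$ is complete) forces $f(|V|,0) = \binom{|V|}{2}$ and $\beta_0 = 1$, thereby recovering Proposition~1 as a boundary case; more generally, the monotonicity of $\beta_0$ and $\beta_1$ along the filtration cited just before the proposition is consistent with $f(|V|,\epsilon)$ decreasing as $\epsilon$ grows.
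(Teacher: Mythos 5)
Your proof is correct and follows essentially the same route as the paper: apply the Euler characteristic identity $\chi(\mathcal{X}_\epsilon) = |V| - |E(\mathcal{X}_\epsilon)| = \beta_0 - \beta_1$ to the thresholded binary graph, identify $f(|V|,\epsilon)$ with the edge count at threshold $\epsilon$, and rearrange. Your explicit definition $f(|V|,\epsilon) = \sum_{i<j} 1(w_{ij}>\epsilon)$ and the $\epsilon = 0$ sanity check are welcome clarifications of what the paper leaves implicit, but they do not change the argument.
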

\noindent

\begin{proof}
The Euler characteristic of $\mathcal{X}_\epsilon$ is $$\chi(\mathcal{X}_\epsilon) = \beta_0(\mathcal{X}_{\epsilon}) - \beta_1(\mathcal{X}_{\epsilon}) = |V| - f(|V|, \epsilon)$$ \cite{adler2010persistent}.
Note that $\beta_0(\mathcal{X}_{\epsilon}) = _0(\mathcal{X}_{1{(w_{ij} > \epsilon)}})$ and $|V|$ is the number of nodes in $\mathcal{X}_\epsilon$. The function $f(|V|, \epsilon)$ is a decreasing count of the edge set over the filtration. Since the first Betti number measures the number of 1-cycles, rearranging we have:
$$\beta_1(\mathcal{X}_{\epsilon}) = \beta_0(\mathcal{X}_{1{(w_{ij} > \epsilon)}}) - |V| + f(|V|, \epsilon)$$
\end{proof}

\subsubsection{Betti Curves of Graph Filtration}
We briefly introduce the general theory of constructing filtration curves and extend to the Betti curves as a special case. Consider $\mathcal{F}_{\mathcal{X}}$ to be the filtration of a graph $\mathcal{X}$, which we previously defined as a collection of a sequence of nested binary networks. We note that the graph $\mathcal{X}$ is weighted and the binary network (\ref{eqn:binary_network}) is unweighted for some ordered thresholds ${{\epsilon}_0} < ... < {{\epsilon}_k}$.

Given the defined weights on $\mathcal{X}$, we can induce a filtration by appealing to a \textit{graph descriptor function}. A \textit{graph descriptor function} function $g$ is a mapping from $\mathcal{X}$ to some $d$-dimensional space, $g:\mathcal{X} \to \mathbb{R}^d$. That's, it evaluates certain attributes of $\mathcal{X}$ and embeds them in some $d$-dimensional space.
For each of the subgraphs in (\ref{eqn:binary_network}), we can construct a higher dimensional path
\begin{equation}
    P_{\mathcal{X}} := \operatorname*{\bigoplus}\limits_{i=0}^{k} g(\mathcal{X}_{\epsilon_i}) \in \mathbb{R}^{k+1 \times d},
\end{equation}
where ${\bigoplus}$ is the direct sum, $k+1$ is the number of thresholds and $d$ can be viewed as the measured features. Its now straightforward to observe that the count of the number of connected components is a descriptor function, and so is the count of the number of cycles. Figure~\ref{fig:betticurve} provides an example filtration process and the corresponding Betti numbers or count.
\begin{figure*}[ht]
 \centering
 \includegraphics[width =\textwidth]{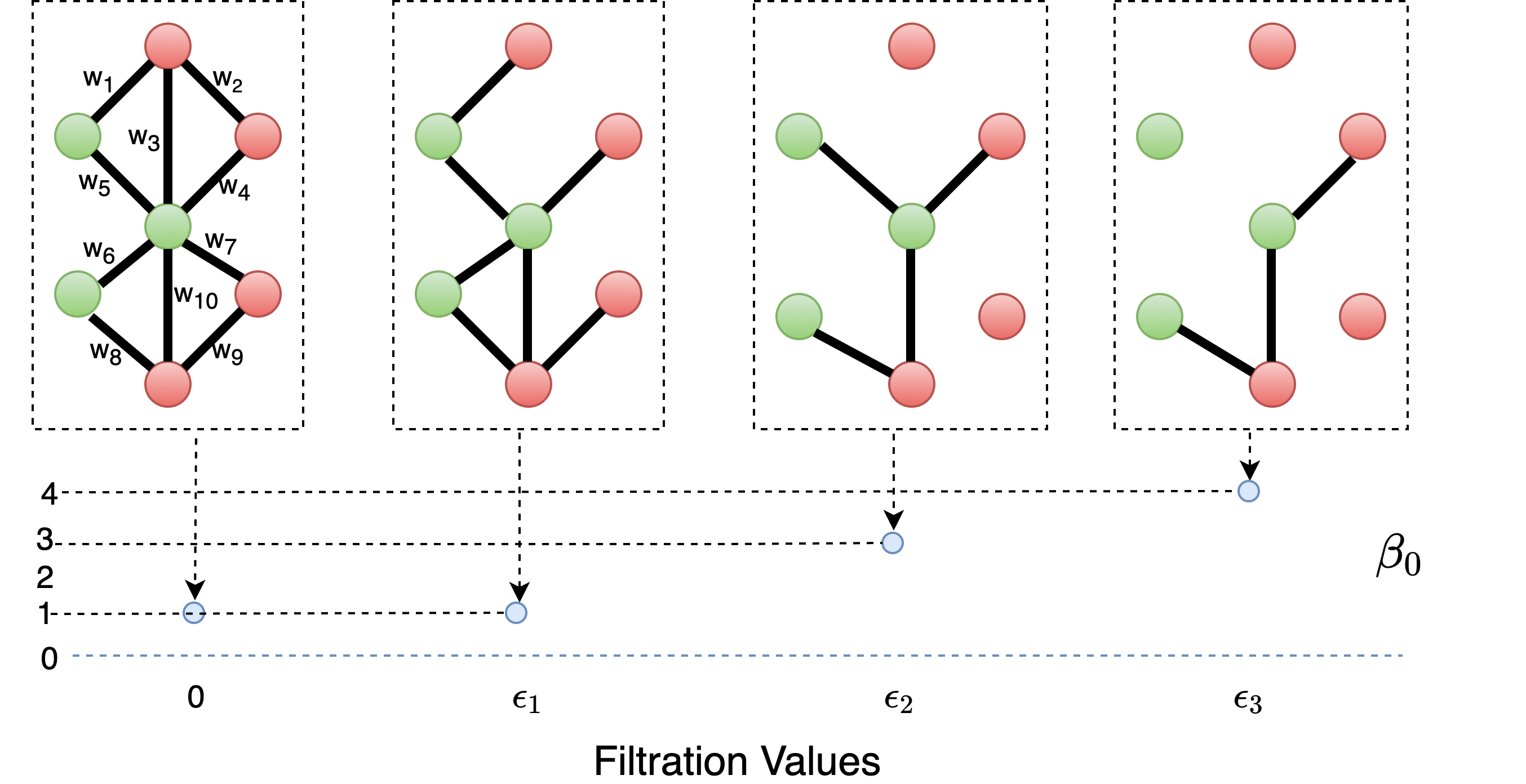}
 \caption{A schematic of a sample filtration sequence of a seven node network $\mathcal{X}$. A weighted graph (far left) is thresholded at an increasing sequence of filtration values. A graph descriptor function $g$ (count of the number of connected components) is evaluated at each filtration value. The thresholding results in a sequence of nested graphs as shown in (\ref{eqn:binary_network}). The collection of such nested sequence of graphs defines a filtration.}
 \label{fig:betticurve}
\end{figure*}
The Betti curves, which displays the Betti numbers over the network filtration value can now easily be constructed. In what follows, we'll demonstrate the curve construction process through a simulation study after we've briefly discussed random graphs.

\subsubsection{Random Graphs} Random graphs construction are mostly inspired by the Erdos-Renyi (ER) random graph \cite{bollobas1985random}. They are two theoretical approaches to generating random graphs based of the ER model. The first approach generates a random graph with a specified set of edges, say $M$, with $v$ nodes by randomly choosing an $M$-edge graph from the subset $\binom{N}{M}$ $= \frac{N!}{(N-M)!M!}$, where $N = \frac{1}{2}p(p-1)$ is the maximum number of edges in a $p$-node graph. The second approach, which is more commonly used involves assigning an edge placement probability $\pi$ to each of the $N$ possible edges. For example, assume we want to generate a random graph of order $p$ (number of nodes), we will choose each of the $N$ potential edges, run a random number generator (or equivalently flip a coin), that outputs an integer say 1 with probability $\pi$, indicating that the edge should be created, or not created otherwise. The probability space associated with this second generation process has lot of desirable properties and has been extensively studied \cite{erdos1960evolution,gilbert1959random}.

Graphs modelled after ER are often unweighted. However, complex networks will require the characterization of the heterogeneity present in edge weights. The construction of weighted random graphs has been studied as an extension of the ER model \cite{garlaschelli2009weighted}, as a generalization of the unweighted graph and also by looking at an ensemble of weighted graphs \cite{barabasi1999emergence}. In it's simplest form, once a random edge weight has been placed, a random weight (often modelled after a probability distribution) is assigned to the edge.

\subsubsection{Modularity and Random Graphs} The concept of modularity of networks or graphs is associated with the strength of the division of networks or graphs into modules. Modularity is therefore a quality index for clustering graphs \cite{newman2006modularity}.

Consider a graph $\mathcal{X} = (V, E)$ with $|V| = p$ and let $\mathcal{C} = \{C_1, ..., C_K\}$ be a partition of $V$. If $C_k \ne \emptyset $ $\forall k \in \{1, ..., K\}$, then $\mathcal{C}$ is known as a clustering of $\mathcal{X}$. It can be observed that $\mathcal{C}$ will be trivial if either $K$ is 1 or $p$. We can view each $C_k$ as an induced subgraph 
$\mathcal{X}_K = \mathcal{X}\left[C_k\right] = (C_k, E(C_k))$, where $ E(C_k) = \{ \{v_i, v_j \} \in E : v_i, v_j \in C_k \}$.
The strength of the modular structure defined by the partition $\mathcal{C}$ can be quantified via the modularity measure \cite{newman2006modularity}

\begin{equation}
    Q = \sum_{k=1}^K (e_{kk} - a_k^2 )
    \label{eqn:modulariry-measure}
\end{equation}

where $e_{kk}$ is the proportion of all edges that are within the module $C_k$ and is given as $$e_{kk} = \frac{|E(C_k)|}{|E|}$$ and $a_k$ is the fraction of all edges that have nodes in module $k$ given as $$ a_k = \frac{1}{2|E|}\sum_{\forall v_i \in C_k } d(v_i) $$ where $d(v_i)$ is the degree of node $v_i$. The objective is always to maximize $Q$ which involves maximizing the number of intramodules edges and minimizing the number of intermodule edges. The maximum value of $Q$ is 1, and values close to that indicates strong modular structure. However, empirical modularity values of graphs falls between $0.3$ and $0.7$ \cite{newman2006modularity}.

\subsubsection{Weighted Random Modular Graphs} Various methods exist for the construction of weighted random modular graphs. For example, the assignment of nodes to modules can be based on the module size distribution, and edge connection modeled after the Havel-Hakimi algorithm \cite{newman2006modularity}. Others also propose modeling edge weights based of Gaussian mixture models (GMM) or the use of GMM coupled with variational inference \cite{garlaschelli2009weighted}.
All these methods share a common theme of devising a reasonable probability distribution to model edge weights. In what follows, we will describe an intuitive and simple process for generating random graphs and assigning random weights to generated edges. This was inspired by \cite{songdechakraiwut2020topological}.

Assume we want to generate a random modular network with $v$ nodes, $K$ modules, an intramodule edge connectivity of $\pi$ and where random edge weights are drawn from distributions $\mu + \sigma Z$ with probability $\pi$ or less for intramodule edges and $\sigma Z$ with probability not exceeding $1$-$\pi$ for intermodule edges. Here, $\mu$ and $\sigma$ are chosen mean and random perturbation (standard deviation) term, and $Z$ is the zero mean Gaussian random variable with unit variance. Also denote the random connectivity probability between any two node, $i$ and $j$ by $\pi_{ij}$. The  algorithm listing (\ref{alg:cap}) is given in the next page.

\subsection{The Graph Laplacian} Consider the graph $\mathcal{X}$, and denote by $i\sim j$ if node $i$ and $j$ are connected, and $w_{ij}$ the edge weight between the two. The graph Laplacian $\mathcal{L}_0 = (l_{ij})$ is defined as 

\begin{equation}
    l_{ij}=
    \begin{cases}
        -w_{ij}, \hspace{.9cm} i \sim j \\
        \sum_{i\ne j} w_{ij}, \hspace{.3cm} i=j\\
        0, \hspace{1.5cm} otherwise
    \end{cases}
    \label{eqn:laplacian-1}
\end{equation}
\noindent
We also denote the degree matrix of the graph by $D$ where $D$ = $Diagonal(d_{i})$, $d_{i} = \sum_j w_{ij}$ and the weight matrix by $W$. Then the Laplacian can be stated as 

\begin{equation}
    \mathcal{L}_0 = D - W
    \label{eqn: laplacian-2}
\end{equation}

\begin{algorithm}[H]
\caption{Generating random modular graphs}\label{alg:cap}
\begin{algorithmic}[1]
\Require $p, K, \pi, \mu, \sigma$
\Ensure $p \ge K, \pi\in [0, 1], \sigma > 0$
\State Divide the nodes evenly among the K modules
\State Pick any two nodes $v_i$ and $v_j$
\If {$v_i, v_j$ belongs to the same module}
    \If {$\pi_{ij}\le \pi$}
        \State $w_{i, j} \sim \mu + \sigma Z$
    \Else
        \State $w_{i, j} \sim \sigma Z$
    \EndIf
\Else
    \If {$\pi_{ij}\le 1 - \pi$}
        \State $w_{i, j} \sim \mu + \sigma Z$
    \Else
        \State $w_{i, j} \sim \sigma Z$
    \EndIf
\EndIf    
\While{$\{v_i, v_j \}$ is unique}
    \State (2) - (14)
\EndWhile
\end{algorithmic}
\end{algorithm}

\noindent
We will now briefly introduce the concept of orientation of a graph that will helps us prove certain properties of $\mathcal{L}_0$.
Consider the graph $\mathcal{X} = (V, E)$, undirected, an orientation of $\mathcal{X}$ is a mapping $f: E \xrightarrow{} V\times V$ assigning a source(s) and target(t) to every edge in $E$, i.e., for every edge $\{ u, v \} \in E$, either $f({u, v}) = (u, v)$ or $f({u, v}) = (v, u)$. The oriented graph $\mathcal{X}^o$ derived from $\mathcal{X}$ by applying the orientation $f$ is $\mathcal{X}^0 = (X, E^0)$, $E^0 = f(E)$.
Denote the incidence matrix associated with $\mathcal{X}^o$ by $B^o$ which is a $|V|\times|E|$ with entries

\begin{equation}
    b_{ij}=
    \begin{cases}
        \sqrt{w_{ij}}, \hspace{.9cm} s(e_j) = v_i \\
        -\sqrt{w_{ij}}, \hspace{.5cm} t(e_j)= v_i \\
        0, \hspace{1.5cm} otherwise
    \end{cases}
    \label{eqn:laplacian-3}
\end{equation}

\noindent
where $s(e_j)$ and $t(e_j)$ are the source and target edges respectively.

\begin{prop}
    Given the incidence matrix $B^o$, weight matrix $W$ and degree matrix $D$, then the following equation holds: $B^o (B^o)^T = D - W$.
    \label{prop:laplaian-d-w}
\end{prop}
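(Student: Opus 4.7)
The plan is to verify the matrix identity entrywise, by computing $(B^o(B^o)^T)_{ik} = \sum_{j=1}^{|E|} b_{ij}\,b_{kj}$ and comparing with the corresponding entry of $D - W$. The key observation is that each summand $b_{ij}b_{kj}$ is nonzero only when the edge $e_j$ is incident to both $v_i$ and $v_k$, so each edge contributes to at most a few specific entries of the product.

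First I would handle the diagonal case $i = k$. The summand $b_{ij}^2$ is nonzero precisely when $v_i$ is incident to $e_j$ (as either source or target), and in either case the square is $w_{ij}$. Summing over all such edges gives $\sum_{j : v_i \in e_j} w_{ij} = d_i$ by the definition of the degree matrix, which agrees with the $(i,i)$ entry of $D - W$ (since the graph has no self-loops, the $(i,i)$ entry of $W$ is zero).

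Next I would handle the off-diagonal case $i \neq k$. Here $b_{ij}b_{kj}$ is nonzero only when $e_j$ is incident to both $v_i$ and $v_k$, i.e., $e_j = \{v_i, v_k\}$. Assuming the graph has at most one edge between any two vertices, exactly one such edge contributes if $v_i \sim v_k$; otherwise the sum is zero. When it does contribute, one endpoint is the source and the other the target of the orientation $f$, so the two factors have opposite signs, and $b_{ij}b_{kj} = \sqrt{w_{ik}}\cdot(-\sqrt{w_{ik}}) = -w_{ik}$. This matches the $(i,k)$ entry of $D - W$, which is $-w_{ik}$ when $v_i \sim v_k$ and $0$ otherwise.

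Combining the two cases gives $B^o(B^o)^T = D - W$ entrywise, which is the desired identity. I do not expect any serious obstacle: the argument is a direct bookkeeping check, the only subtlety being that the specific signs chosen in (\ref{eqn:laplacian-3}) produce the minus sign on the off-diagonal independently of which endpoint is designated source and which target, so the identity holds for every orientation $f$ of $\mathcal{X}$.
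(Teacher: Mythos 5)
Your proof is correct and follows essentially the same route as the paper's: an entrywise computation of $B^o(B^o)^T$, identifying the diagonal entries $\sum_j b_{ij}^2$ with the degrees and the off-diagonal entries with $-w_{ik}$ via the opposite source/target signs. You simply carry out the bookkeeping (including the no-self-loop and simple-graph caveats and the orientation-independence of the sign) more explicitly than the paper's one-line sketch.
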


\begin{proof}
We observe that $\left[ B^o (B^o)^T \right]_{ij} = b_{[, j]}(b_{[, j]})^T = \sum_{j}b_{i, j}^2$ when $i = j$ and it's simply $-w_{ij}$ when $i\ne j$, hence the conclusion follows.
\end{proof}

\begin{corollary}
The Graph Laplacian $\mathcal{L}_0$ is positive semidefinite.
\label{coll:laplaian-psd}
\end{corollary}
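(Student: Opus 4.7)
The plan is to combine the definition of the graph Laplacian in equation~(\ref{eqn: laplacian-2}) with Proposition~\ref{prop:laplaian-d-w} to exhibit $\mathcal{L}_0$ as a Gram matrix, from which positive semidefiniteness is immediate.

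First I would recall that, by equation~(\ref{eqn: laplacian-2}), $\mathcal{L}_0 = D - W$. Then, invoking Proposition~\ref{prop:laplaian-d-w}, I would substitute $D - W = B^o (B^o)^T$ to obtain the factorization $\mathcal{L}_0 = B^o (B^o)^T$. This reduces the problem to the standard fact that any matrix of the form $MM^T$ is positive semidefinite.

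Next I would verify this standard fact directly for completeness: for an arbitrary vector $x \in \mathbb{R}^{|V|}$, compute
\begin{equation*}
x^T \mathcal{L}_0 x \;=\; x^T B^o (B^o)^T x \;=\; \left\| (B^o)^T x \right\|_2^2 \;\ge\; 0,
\end{equation*}
where the norm is the standard Euclidean norm on $\mathbb{R}^{|E|}$. Since $\mathcal{L}_0$ is symmetric (its factorization $B^o (B^o)^T$ is manifestly symmetric, matching the symmetry of $D - W$), this nonnegativity of the quadratic form is exactly the definition of positive semidefiniteness and concludes the proof.

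There is essentially no obstacle here: the entire content is packaged in the preceding proposition, and the corollary is a one-line consequence of the Gram matrix structure. The only subtlety worth mentioning is that the construction of $B^o$ implicitly assumes nonnegative edge weights $w_{ij} \ge 0$ so that $\sqrt{w_{ij}}$ is real in~(\ref{eqn:laplacian-3}); under this standing assumption (which is natural for the brain connectivity networks considered here) the argument goes through without modification.
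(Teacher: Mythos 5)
Your proposal is correct and follows essentially the same route as the paper: invoke Proposition~\ref{prop:laplaian-d-w} to write $\mathcal{L}_0 = B^o (B^o)^T$ and conclude from $x^T \mathcal{L}_0 x = \lVert (B^o)^T x \rVert_2^2 \ge 0$. Your added remark that the construction of $B^o$ presumes nonnegative edge weights is a reasonable clarification but does not change the argument.
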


\begin{proof}
For any $x \in \mathbb{R}^v$, consider an orientation of $\mathcal{X}$, then
$$x^T \mathcal{L}_0 x = x^T \left[ B^o (B^o)^T \right] x = \lvert \lvert (B^o)^Tx \rvert\rvert_2^2 \ge 0$$
\end{proof}

\noindent 
We observe that each row of $\mathcal{L}_0$ sums to 0, hence the column vector $\mathbf{1} = \left[1,  1, \dots, 1 \right]_p^T$ is in the null space of $\mathcal{L}_0$, and 0 and $\mathbf{1}$ are eigenvalue and eigenfunctions of $\mathcal{L}_0$ respectively. Let $\lambda_j$ be the eigenvalue associated with the eigenfunctions $\psi_j$ of $\mathcal{L}_0$. Then we have the following system of equations:

\begin{equation}
    \mathcal{L}_0\psi_j = \lambda_j\psi_j
\end{equation}

\noindent
From Corollary $\ref{coll:laplaian-psd}$, we have that $\mathcal{L}_0 \succeq 0 \implies \lambda_j \ge 0$. If we order the eigenvalues, in non-decreasing order, then we have that $\lambda_0 = 0$ and $\psi_0 = \textbf{1}$.
The eigenequation is solvable in the Euclidean space and the solution is given as sine and cosine basis. 
In general, the equation is numerically solved by converting it to the generalized eigenvalue problem. 
Since $\psi_j$ is an eigenfunction, $c\psi_j$ is also an eigenfunctions. To avoid this ambiguity,  we usually normalize the eigenfunctions such that $\psi_i ^{\top} \psi_j = \delta_{ij}$ with Kronecker delta $\delta_{ij}$. For $\lambda_0=0$,  eigenfunctions is trivially $\psi_0=const$. The constant needs to be  $\psi_0=\frac{1}{\sqrt{p}}\mathbf{1}$ to be normalized properly.

\subsection{Hodge Laplacian} Simplicial k-chains ($\mathcal{K}_k$), as discussed earlier, are just a formal sum of oriented collection of simplices. One other thing to note is that, in constructing $\mathcal{K}_k$, once we've chosen a basis for it, a change in orientation is equivalent to flipping the sign of the coefficients. We can equip $\mathcal{K}_k$ with the standard $\mathcal{L}^2$ inner product, which will result in the Hilbert space denoted $\mathbb{H}(\mathcal{K}_k)$. We also introduced boundary maps $\partial_k: \mathcal{K}_k \xrightarrow{} \mathcal{K}_{k-1}$, which is a map between two finite dimensional vector spaces. If we define a basis for each of these operators, we can abstract the chain operations in terms of matrix multiplication which will allow us to have a matrix representation of the boundary operators. Let $\mathbb{B}$  be the matrix representation of the boundary operator $\partial_k$, the Hodge Laplacian is now defined as follows. Denote by $\mathcal{L}_k$ the k$th$ Hodge Laplacian. Then we have that 

\begin{equation}
    \mathcal{L}_k = \partial_{k+1}\partial_{k+1}^* + \partial_{k}^*\partial_{k}
    \label{eqn:hodge1}
\end{equation}
which is equivalent to 
\begin{equation}
    \mathcal{L}_k = \mathbb{B}_{k+1}\mathbb{B}_{k+1}^T + \mathbb{B}_{k}^T\mathbb{B}_{k}
    \label{eqn:hodge2}
\end{equation}
\noindent
in the matrix form, which is mainly used for actual computation. If there is no ambiguity, we will intermix the boundary operator and its matrix form. Here $\partial_{k}^*$ is the adjoint of $\partial_k$. It's straightforward to check that, $\mathcal{L}_k$ is a $|\mathcal{K}_k|\times |\mathcal{K}_k|$ matrix, where $|\mathcal{K}_k|$ is the cardinality of k-simplices in $\mathcal{K}_k$. The Hodge Laplacian has been very useful not only in establishing various theoretical results but also in various applied works \cite{schaub2020random,muhammad2006control}.

We now demonstrate the composition of the matrix $\mathcal{L}_k$. Take as an example, $k = 0$, then $\mathcal{L}_0 = \partial_1\partial_1^* = \mathbb{B}_1\mathbb{B}_1^T$. 
\begin{figure*}[ht]
 \centering
 \includegraphics[width =\textwidth]{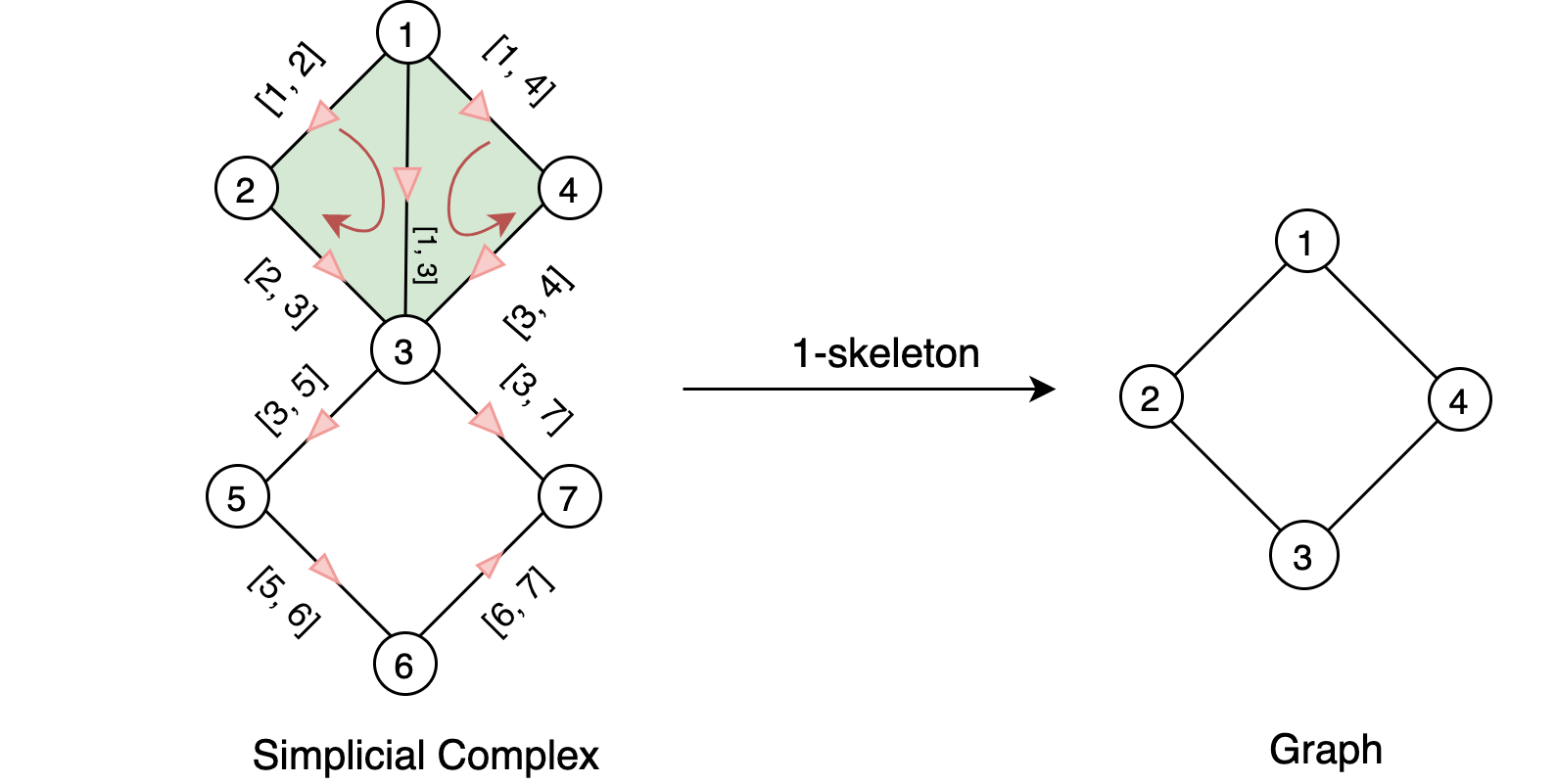}
 \caption{
 A schematic of a 1-skeleton representation of a  simplicial complex $\mathcal{K}$. An oriented simplicial complex $\mathcal{K}$ (left) with each filled in triangle denoting a 2-simplex. A graph corresponding to the 1-skeleton representation of $\mathcal{K}$.}
 \label{fig:1-skeleton}
\end{figure*}
Note that $\partial_1$ maps from 1-simplices (edges) to 0-simplices(nodes), hence $\mathcal{L}_0$ is a variation of the construction associated with equation \eqref{eqn:laplacian-1}, that is
\begin{equation}
    \mathcal{L}_0(ij) =
    \begin{cases}
        -1, \hspace{.9cm} v_i\sim v_j  \\
        deg(v_i), \hspace{.3cm} i = j\\
        0, \hspace{1.3cm} otherwise
    \end{cases}
    \label{eqn:hodge-laplacian-0}
\end{equation}
Figure~\ref{fig:1-skeleton} provides an illustration of such a mapping.
We can check that the matrix representation $\mathbb{B}_1\mathbb{B}_1^T$ will have entries corresponding to the unweighted version in Proposition \ref{prop:laplaian-d-w}. An alternative construction of the Laplacain $\mathcal{L}_k$ for $k>0$ is as follows. Two simplices $\sigma_i$, $\sigma_j$ are said to be upper adjacent if both are faces of some $k+1$-simplex in $\mathcal{K}$. Denote this adjacency by $\sigma_i \wedge \sigma_j$. The upper degree of a k-simplex $(deg_u(\sigma))$ is the number of $k+1$-simplices in $\mathcal{K}$ of which $\sigma$ is a face. The above generalizes the concept of degree and adjacency of graphs, and note that analogous definitions holds for the lower degree and upper adjacency. It can be shown that \cite{muhammad2006control}
\begin{equation}
    \mathcal{L}_k = D - A_u + (k+1)I_{N_k} + A_l
    \label{eqn:laplace_components}
\end{equation}
where $D$ is the diagonal degree matrix, $A_u(A_l)$ is the upper(lower) degree matrix, $I_{N_k}$ is the identity matrix.

It's now straightforward to compute the various Laplacian matrices for the example shown in Figure~\ref{fig:1-skeleton}.
$\mathcal{L}_0 = \mathbb{B}_1\mathbb{B}_1^T$ will be a $7\times 7$ matrix and $\mathcal{L}_1 = \mathbb{B}_2\mathbb{B}_2^T + \mathbb{B}_1^T\mathbb{B}_1^T$ will be a $9 \times 9$ matrix and $\mathcal{L}_2 = \mathbb{B}_2^T\mathbb{B}_2 $ will be a $2\times 2$ matrix.

\subsubsection{Cycles} We now extend the definition of a cycle under 1-skeleton to any  simplicial complex of arbitrary order. In previous section, we defined the $k$-th homology group as $\calH_k = \calZ_k / \calB_k$ where $\calZ_k$ is the $k$-cycle group and $\calB_k$ is the $(k+1)$-boundary group with $\calB_k \subset \calZ_k$. The rank of the homology groups gives the Betti numbers $\beta_k = rank(\calH_k)$ \cite{hatcher2002algebraic, topaz2015topological}. Since the $k$-th homology group contains k-dimensional cycles which differ by only homological cycles, the $\beta_k$ counts the number of k-cycles. The basis associated with $\calH_0$, $\calH_1$ and $\calH_2$ corresponds to connected components, holes and cavities respectively.
 We will now motivate the concept of persistence of k-cycles through persistent homology. Let $\calK$ be a simplicial complex, and analogous to graph filtration, consider a filtration $\{\calK_{\epsilon_k}\}_{\epsilon_k \in\mathbb{R}}$ where for $\epsilon_{k_1} \le \epsilon_{k_2}$, we have that $\calK_{\epsilon_{k_1}} \subset \calK_{\epsilon_{k_2}} \subset \calK$.  The filtration of $\calK$ induces a homomorphism between the homology groups, i.e. $\calH_k\left(\calK_{\epsilon_{k_1}}\right) \xrightarrow[]{} \calH_k\left(\calK_{\epsilon_{k_2}}\right)$, which allows us to track the birth of homology classes over the span of the filtration to their death point. Persistence modules, which uniquely decompose into persistence intervals \cite{zomorodian2005computing} is a tool for tracking this evolution. Let $\calP_k$ denote the persistence module, and for a $k$-cycle $\calC \in \calP_k$, denote the point it first appeared (birth) as $b(\calC)$ and the point it disappeared (death) as $d(\calC)$. Then the lifespan of this cycle can be represented as $\left[b(\calC),  d(\calC)\right)$, the persistence interval. A finite collection of the persistence interval $\left[b(\calC),  d(\calC)\right)$ can be summarized in the form of a \textit{barcode}, which can intuitively be thought of as the persistence analogue of Betti numbers.

\subsubsection{Diffusion on Simplicial Complexes} Consider a 1-skeleton $\mathcal{X}  = (V, E)$, where $V$ is the set of $p$ nodes and $E$ is the edge set. The Laplacian $\mathcal{L}_0$, similar to the previous discussion can be written as the difference between the degree matrix ($D$) and the adjacency matrix ($A$). The concept of diffusion on k-skeletons is to generalize the notion of frequency and oscillation to any k-skeleton (could be irregular) via a linear operator.
Define $g$ to be a 1-skeleton signal, that's a function defined on the vertices $V$. We can view g as a measure of the amount of substance that moves from node $i$ to $j$ or vice-versa. The change (gradient) in this function from node $i$ to $j$  will have the form
\begin{equation}
    \triangledown g(i) = c(g(j) - g(i))
    \label{eqn:diff_function}
\end{equation}
where the constant $c$ can be viewed as the diffusion rate across the edge. Then $\triangledown g$ is a mapping from $V$ to $E$. From the previous discussion on the eigenvalues and eigenfunctions of the Laplacian $\mathcal{L}_0$, the collection of $\psi_j$ forms an orthogonal basis for the 1-skeleton signal $g\in \mathbb{R}^p$, often defined on the Hilbert space.

Let $\mathcal{F}(\lambda_j)$ denote the Fourier transform of $g$, which is an expansion of $g$ interms of the eigenvectors of the Laplacian $\calL_0$. Then
\begin{equation}
    \mathcal{F}(\lambda_j) = \langle g, \psi_j \rangle = \sum_{i \in V}g(i)\psi_j(i)
    \label{eqn:ft}
\end{equation}
Since the eigenfunctions form an orthonormal basis, an inverse Fourier transform exists and is given by:
\begin{equation}
    g(i) = \sum_{j = 0}^{p-1} \mathcal{F}(\lambda_j) \psi_j(i)
    \label{eqn:ft_inverse}
\end{equation}
Assume we have initial observed data $f$ defined on node set $V$. Take the diffusion equation 
 on a 1-skeleton signal $g$ to be
 \begin{equation}
    \frac{\partial g(t)}{\partial t} = \sminus\mathcal{L}_0 g(t)
    \label{eqn:prop-gs-1}
 \end{equation}
with initial condition $g(t=0)=f$. The unique solution to this equation is given by 
 \begin{equation}
    g(t) = \sum_{j=0}^{p-1} e^{-\lambda_j t} f_i \psi_j
    \label{eqn:prop-gs-2}
 \end{equation}
where $f_j = f^{\top} \psi_j$ is the Fourier coefficient with respect to $\psi_j$ \cite{huang2019fast,RICAUD2019474}.

We can generalize this result to k-skeletons. Assume $g(t)$ to be functional data with some parameter $t$ at a generic simplex $\sigma$ (a collection of k-simplices). It is expected such functional data is highly noisy and requires denoising, which will be done through the heat equation. Before we formulate the generalized model, we state some important properties associated with the Hodge Laplacian $\mathcal{L}_k$ for a general $k$. There are quite a number of important properties of the eigenvectors and eigenvalues of $\mathcal{L}_K$ \cite{barbarossa2018learning}. One important property is the symmetricity of $\mathcal{L}_k$, which also implies its diagonalizable. Ordinarily, been symmetric is the property of operators and not just the associated matrix, but as indicated earlier we will intermix the boundary operator and the matrix form. Henceforth, properties defined will be assumed to be on the matrix form of $\mathcal{L}_k$.

\begin{prop}
Given the kth Hodge Laplacian $\mathcal{L}_k$, denote its set of eigenvectors by $\psi = \left\{\psi_1, \tdots, \psi_{|\mathcal{K}_k|}\right\}$ with associated eigenvalues $\lambda = \left\{\lambda_1, \tdots, \lambda_{|\mathcal{K}_k|}\right\}$. Then $\psi$ forms an orthonormal basis of $\mathbb{R}^{|\mathcal{K}_k|}$.
\label{prop:orthonormal-basis-of-Lk}
\end{prop}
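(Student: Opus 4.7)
The plan is to reduce the statement to the real spectral theorem applied to $\mathcal{L}_k$. The key observation is that $\mathcal{L}_k$ is a real symmetric matrix, so standard linear algebra guarantees a full orthonormal eigenbasis for $\mathbb{R}^{|\mathcal{K}_k|}$.

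First I would verify symmetry of $\mathcal{L}_k$ directly from the matrix form in \eqref{eqn:hodge2}. Using $(AB)^T = B^T A^T$ and $(A^T)^T = A$, we have
\begin{equation*}
\mathcal{L}_k^T = (\mathbb{B}_{k+1}\mathbb{B}_{k+1}^T)^T + (\mathbb{B}_{k}^T\mathbb{B}_{k})^T = \mathbb{B}_{k+1}\mathbb{B}_{k+1}^T + \mathbb{B}_{k}^T\mathbb{B}_{k} = \mathcal{L}_k.
\end{equation*}
Each summand $\mathbb{B}_{k+1}\mathbb{B}_{k+1}^T$ and $\mathbb{B}_{k}^T\mathbb{B}_{k}$ is individually symmetric and positive semidefinite (mirroring the argument given for Corollary \ref{coll:laplaian-psd}), so their sum is symmetric and positive semidefinite, and $\mathcal{L}_k$ is an $|\mathcal{K}_k| \times |\mathcal{K}_k|$ real symmetric matrix.

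Next I would invoke the real spectral theorem: every real symmetric $n \times n$ matrix admits an orthogonal diagonalization $\mathcal{L}_k = \Psi \Lambda \Psi^T$, where $\Lambda = \diag(\lambda_1, \ldots, \lambda_{|\mathcal{K}_k|})$ collects the (real) eigenvalues and the columns of $\Psi$ are eigenvectors satisfying $\Psi^T \Psi = I$. This orthogonality condition is precisely $\psi_i^T \psi_j = \delta_{ij}$, so $\{\psi_1, \ldots, \psi_{|\mathcal{K}_k|}\}$ is an orthonormal set of size $|\mathcal{K}_k|$ in $\mathbb{R}^{|\mathcal{K}_k|}$. Any orthonormal set of cardinality equal to the ambient dimension is automatically a basis (it is linearly independent and spans a subspace of the correct dimension), so the claim follows.

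There is essentially no serious obstacle here: the proof is a one-line reduction to the spectral theorem once symmetry is checked. The only subtlety worth flagging is the treatment of repeated eigenvalues, where one must pick an orthonormal basis inside each eigenspace (via, e.g., Gram--Schmidt); the spectral theorem guarantees that this can always be done, and the resulting vectors remain eigenvectors. One could also present a self-contained argument by induction on dimension, peeling off one unit eigenvector at a time using the symmetry of $\mathcal{L}_k$ to show that the orthogonal complement of an eigenspace is invariant under $\mathcal{L}_k$, but citing the spectral theorem is cleaner given the scope of the paper.
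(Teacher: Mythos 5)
Your proof is correct and follows essentially the same route as the paper: both arguments rest on the symmetry of $\mathcal{L}_k$ and the spectral theory of real symmetric matrices to obtain an orthonormal eigenbasis of $\mathbb{R}^{|\mathcal{K}_k|}$. The only difference is one of completeness rather than method --- the paper re-derives by hand the orthogonality of eigenvectors belonging to distinct eigenvalues and then appeals to diagonalizability of symmetric matrices, whereas you invoke the spectral theorem wholesale, explicitly verify symmetry from the matrix form \eqref{eqn:hodge2}, and address the repeated-eigenvalue case, points the paper asserts or only gestures at.
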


\begin{proof}
The conclusion follows from the symmetricity of $\mathcal{L}_k$ if we can show that the eigenvectors are mutually orthogonal.

Choose any two eigenvectors $\psi_i, \psi_j$, taking the inner product we have
$$\langle \mathcal{L}_k\psi_i, \psi_j\rangle = \langle \psi_i, \mathcal{L}_k^T\psi_j\rangle$$
Since $\mathcal{L}_k$ is symmetric, let $\lambda_i \ne \lambda_j$ be the corresponding eigenvalues of $\psi_i, \psi_j$. Usually we choose $\psi_i, \psi_j$ such the uniqueness condition is satisfied. Then we have
\begin{align}
    \lambda_i \langle \psi_i, \psi_j \rangle &
    \begin{aligned}[t]
        &=  \langle \lambda_i \psi_i, \psi_j \rangle\\
        &= \langle \mathcal{L}_k \psi_i, \psi_j \rangle\\
        &= \langle\psi_i,  \mathcal{L}_k^T \psi_j \rangle\\
        &= \langle\psi_i,  \mathcal{L}_k \psi_j \rangle\\
        &= \lambda_j \langle\psi_i,  \psi_j \rangle\\
    \end{aligned}
\end{align}
Since 
$$\lambda_i \ne \lambda_j \implies (\lambda_i - \lambda_j) \langle \psi_i, \psi_j \rangle = 0 \implies \psi_i \perp \psi_j$$

we have that the set of eigenspaces are mutually orthogonal which implies when taken together, will form an orthonormal  subset of $\mathbb{R}^{|\mathcal{K}_k|}$. Symmetric matrices are diagonalizable and since $\mathcal{L}_k$ is symmetric, this orthonormal set will be a basis.
\end{proof}

In essence, what this proposition demonstrates is that, two distinct eigenvalues will yield orthogonal eigenvectors, and it can be verified that this still holds even when we have a single unique eigenvalue.

Define the {\em simplex Fourier transform} $\mathcal{F}$ as a mapping from $\mathcal{K}_k$ to $\mathbb{R}$. 
Let $\mathcal{F}(\lambda_j)$ denote the simplex Fourier transform of functional data $g$ defined over $\mathcal{K}_k$ at $\lambda_j$:
\begin{equation}
    \mathcal{F}(\lambda_j) = \sum_{\sigma \in \mathcal{K}_k}g(\sigma)\psi_j(\sigma).
\end{equation}

\begin{prop}
Let $\mathcal{K}_k$ be a collection of such $k$-simplices with cardinality $|\mathcal{K}_k|$. Let $f$ be the observed functional data data defined over $\mathcal{K}_k$. The diffusion equation over the simplices  
\begin{equation}
    \frac{\partial g(t)}{\partial t}  = \sminus\mathcal{L}_k g(t)
    \label{eqn:higher-diff}
\end{equation}
\noindent
with initial condition $g(t=0) = f$ has the unique solution given by
\begin{equation}
     g(t) = \sum_{j=0}^{|\mathcal{K}_k| -1}e^{-\lambda_jt}f_j \psi_j
    \label{eqn:simplex-diffusion-equation},
\end{equation}
\label{prop:diffusion-on-simplices}

where $\lambda_j$ and $\psi_j$ are the $j\text- th$ eigenvalue and eigenvector of the $k\text- th$ Hodge Laplacian  $\mathcal{L}_k$, and $f_j= f^T\psi_j$ is the expansion coefficient.

\end{prop}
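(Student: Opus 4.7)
The plan is to exploit the spectral decomposition of $\mathcal{L}_k$ guaranteed by Proposition~\ref{prop:orthonormal-basis-of-Lk}, which gives an orthonormal eigenbasis $\{\psi_j\}_{j=0}^{|\mathcal{K}_k|-1}$ of $\mathbb{R}^{|\mathcal{K}_k|}$ with real eigenvalues $\{\lambda_j\}$. Since these eigenvectors form a basis of the ambient space in which $g(t)$ lives for each fixed $t$, the solution strategy is the standard one of reducing a coupled linear system of ODEs to a collection of decoupled scalar ODEs by changing coordinates to the eigenbasis.

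First, I would write $g(t) = \sum_{j=0}^{|\mathcal{K}_k|-1} c_j(t)\psi_j$ with unknown time-dependent coefficients $c_j(t)$; this ansatz is lossless because $\{\psi_j\}$ is a basis. Differentiating in $t$ and substituting into the diffusion equation \eqref{eqn:higher-diff} gives
\begin{equation*}
\sum_{j} c_j'(t)\psi_j \;=\; -\mathcal{L}_k\sum_{j} c_j(t)\psi_j \;=\; -\sum_{j}\lambda_j c_j(t)\psi_j,
\end{equation*}
where the last equality uses $\mathcal{L}_k\psi_j=\lambda_j\psi_j$. Taking the inner product with $\psi_i$ and invoking orthonormality $\langle\psi_i,\psi_j\rangle=\delta_{ij}$ yields the uncoupled scalar ODEs $c_j'(t) = -\lambda_j c_j(t)$, whose solutions are $c_j(t) = c_j(0)e^{-\lambda_j t}$. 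The initial condition $g(0)=f$ and orthonormality then force $c_j(0) = \langle f,\psi_j\rangle = f^T\psi_j = f_j$, producing the claimed formula \eqref{eqn:simplex-diffusion-equation}.

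For uniqueness, I would argue that if $g_1(t)$ and $g_2(t)$ are two solutions with the same initial data, then $h(t)=g_1(t)-g_2(t)$ satisfies $h'(t)=-\mathcal{L}_k h(t)$ with $h(0)=0$. Expanding $h(t)$ in the eigenbasis reduces to scalar ODEs $d_j'(t)=-\lambda_j d_j(t)$ with $d_j(0)=0$, each of which has only the zero solution by the Picard--Lindel\"of theorem (or directly by integrating factors), so $h\equiv 0$. Equivalently, one can appeal to the fact that $g(t)=e^{-t\mathcal{L}_k}f$ is the unique solution of a linear ODE with constant coefficients on a finite-dimensional vector space.

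There is no real obstacle here beyond bookkeeping: the heavy lifting is done by Proposition~\ref{prop:orthonormal-basis-of-Lk}. The only mildly delicate step is justifying the termwise differentiation $\frac{d}{dt}\sum_j c_j(t)\psi_j = \sum_j c_j'(t)\psi_j$, which is immediate because the sum is finite (the simplicial complex is finite, so $|\mathcal{K}_k|<\infty$) and hence differentiation passes through the sum without any convergence subtleties.
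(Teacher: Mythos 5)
Your proposal is correct and follows essentially the same route as the paper: the paper applies the simplex Fourier transform (i.e.\ projection onto the orthonormal eigenbasis from Proposition~\ref{prop:orthonormal-basis-of-Lk}) to decouple \eqref{eqn:higher-diff} into scalar equations $\frac{\partial}{\partial t}\mathcal{F}(g(t)) = -\lambda_j \mathcal{F}(g(t))$, solves them, and inverts, which is exactly your eigenbasis expansion with coefficients $c_j(t)=\langle g(t),\psi_j\rangle$. Your explicit uniqueness argument via the zero-initial-data difference is a small addition the paper leaves implicit, but it does not change the method.
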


\begin{proof}
We showed in Proposition \eqref{prop:diffusion-on-simplices} that the eigenfunctions of $\mathcal{L}_k$ forms an orthonormal basis. Thus it applies that the inverse Fourier transform exists and is given similarly by \eqref{eqn:ft_inverse} with 0-simplices(nodes) replaced with k-simplices. It then follows that
$$\mathcal{F}( \frac{\partial g(t) }{\partial t }) = \mathcal{F}\left(\sminus \mathcal{L}_kg(t) \right),\quad \frac{\partial \mathcal{F}\left( g(t) \right)}{\partial t}   = \sminus \mathcal{L}_k\mathcal{F}\left( g(t) \right)$$
from which the general solution is given by $$\mathcal{F}(g(t)) = e^{\sminus t \lambda_j}\mathcal{F}_0(g(t))$$ 
where $\mathcal{F}_0(g(t))$ is the Fourier transform at the initial condition $g(t = 0) = f$ and $\mathcal{F}_0(g(t)) = f_j = f^T\psi_j$ been the Fourier coefficients. Using (\ref{eqn:ft_inverse}) we have that 
\begin{equation}
     g(t) = \sum_{j=0}^{|\mathcal{K}_k| -1}e^{-\lambda_jt}f_j\psi_j
    \label{eqn:simplex-diffusion-solution}
\end{equation}
\end{proof}

\subsection{Heat kernel over simplices}

Given the simplicial complex $\calK_k$, let $f$ be the functional data defined over the $k$-simplices in $\calK_k$. From Proposition \ref{prop:orthonormal-basis-of-Lk}, the eigenfunctions of the $k$-$th$ Hodge Laplacian  $\calL_k$ are given by $\psi_0, \dots , \psi_{|\calK_k|-1}$ with the corresponding eigenvalues $\lambda_0 \le \dots \le \lambda_{|\calK_k|-1}$. The diffusion $ g(t, \sigma)$ of initial functional data $f(\sigma)$ on $\calK_k$ after time $t$ is given by 
\begin{equation}
    \frac{\partial}{\partial t}g(t, \sigma) = -\calL_k g(t, \sigma), \hspace{0.3cm} g(t = 0) = f (\sigma)
    \label{eqn:heat-equation-2}.
\end{equation}

For $\sigma_1, \sigma_2 \in \calK_k$, the heat kernel is defined as
\bq K_t(\sigma_1, \sigma_2) =  \sum_{j = 0}^{|\calK_k|-1} e^{-\lambda_j}\psi_j(\sigma_1)\psi_j(\sigma_2) \label{eqn:
heat-kernel}, \eq
where $t$ is the kernel bandwidth. With the exception of a few geometries, the heat kernel $K_t$ does not admit an explicit analytic form since the eigenfunctions $\psi_j$ are not often  given analytically \cite{chung2015unified}. A proposed workaround is to study the small bandwidth behavior via the parametrix expansion, the asymptotic expansion of heat kernel \cite{rosenberg1997laplacian}. When the geodesic distance $d(\sigma_1, \sigma_2)$ between simplices $\sigma_1$ and $\sigma_2$ is small enough,  we have the following parametrix expansion  \cite{chung2015unified}
\bq K_t(\sigma_1, \sigma_2) = \left( 4\pi t \right)^{\frac{\sminus (k+1)}{2}}\exp{\sminus \frac{d^2(\sigma_1, \sigma_2)}{4t^2}}[1 + \mathcal{O}(t^2)] \label{eqn:parametrix-expansion}, \eq
which approximately follows the Gaussian kernel defined on the tangent space near $\sigma_1$ and $\sigma_2$. The tangent space is Euclidean.

Note that each simplex $\sigma_j \in \calK_k$ is an ordered tuple of $k$+$1$ elements of the metric space. Denote these elements by $\sigma_j = \{ \sigma_{j0}, \dots, \sigma_{jk} \}$. If the following holds
\bse \frac{1}{2}[d^2(\sigma_{j0}, \sigma_{jj^\prime}) + d^2(\sigma_{j0}, \sigma_{jj^{\prime\prime} }) - d^2(\sigma_{jj^\prime}, \sigma_{jj^{\prime\prime} })] > 0 \ese
then the metric space is flat, in which case the dimension of the metric does not exceed a specified Euclidean $k$+$1$-space \cite{morgan1974embedding}. The metric space can then be embedded in the Euclidean space, and the heat kernel becomes a Gaussian kernel \cite{chung2015unified,morgan1974embedding}.
If we convolve $K_t$ with the initial functional data, we can algebraically show that the convolution is solution to \eqref{eqn:heat-equation-2} \cite{chung2015unified,huang2019fast}:
\bq K_t*f (\sigma) = \sum_{j=0}^{|\calK_k|-1}e^{-\lambda_j t}f_j\psi_j (\sigma) \label{eqn:gaussian-kernel-convolve} \eq

\subsection{Simulation Study} This section will demonstrate the diffusion process for 0-simplices and 1-simplices. We first generate a graph with random perturbation. Algorithm \ref{alg:cap} can be adapted to generate this graph. The Hodge's $\mathcal{L}_0$ and $\mathcal{L}_1$ can now be computed.
In both cases, we will follow the Algorithm listed below which implements Proposition \ref{prop:diffusion-on-simplices}.

\begin{breakablealgorithm}
\caption{Diffusion on simplicial complex}
\label{alg:diffons-simplices}
\begin{algorithmic}[1]
\Require $k, i,  f, T, \mathcal{K}_k$
\Ensure $k \in [0, 2]$, $i \in (0, T)$, $T>0$
\If{k=0}
    \State Compute the adjacency matrix $A$ and degree matrix $D$
    \State Compute $\mathcal{L}_0 = D - A$ and $\bl$, $\bpsi$ using $ \mathcal{L}_0 \bpsi = \bpsi\bl$
    \State \Return $\bl$, $\bpsi$ 
\ElsIf{k=1}
    \State Obtain set of all 1(0)-simplices $E(V)$, and compute $|E|$ and $|V|$
    \State Initialize matrices ${B_1}_{|V| \times |E|}$ and ${B_2}_{|E| \times \sim}$ \Comment{$\sim \implies$ dynamic column size}
    \For {$e \in E$}
        \State Construct the two nodes of $e$: $e_1$, $e_2$ and a row index $rn_1 = `e_1\sminus e_2`$
        \State Set $B_1[e_1, rn] \xleftarrow{} {\sminus 1}$, $B_1[e_2, rn] \xleftarrow{}{1}$ 
        \For{$p \in V$}
            \If{$\{e_1, p\}$ is 1-simplex \& $\{e_2, p\}$ is 1-simplex}
                \State Get indexes $rn_2 = `e_1\sminus p`, rn_3 = `e_2\sminus p`$, $cn = `e_1\sminus e_2-p`$
                \State Choose an orientation  
                \State Set $B_2[rn_1, cn] \xleftarrow{} {1}, B_2[rn_2, cn] \xleftarrow{}{1}, B_2[rn_3, cn] \xleftarrow{} {\sminus 1}$
            \EndIf
        \EndFor
    \EndFor
    \State Compute $\mathcal{L}_1 = B_{2}B_{2}^T + B_{1}^TB_{1}$ and $\bl, \bpsi$ using $ \mathcal{L}_1 \bpsi = \bpsi\bl$
    \State \Return $\bl$, $\bpsi$
\Else
    \State
    \State Obtain $k\text-1(k)$-simplices and $k(k+1)$-cycles
    \State Initialize matrices $B_k \xleftarrow[]{} |\calK_{k-1}| \times |\calK_k|$, and $B_{k+1} \xleftarrow[]{} |\calK_{k}| \times |\calK_{k+1}|$ 
    \For {$\sigma_{k-1} \in \calK_{k-1}$}
        \For {$\sigma_{k} \in \calK_{k}$}
            \If{$\sigma_{k-1}$ is a face of $\sigma_k$}
                \State Construct $rn_1 \xleftarrow[]{}$ row index of $\sigma_{k-1}$ in $B_{k}$
                 \State Construct $cn_1 \xleftarrow[]{}$ column index of $\sigma_{k}$ in $B_{k}$
                 \State Set $B_{k}[rn_1, cn_1] \xleftarrow[]{} 1$
            \EndIf
            \For {$\sigma_{k+1} \in \calK_{k+1}$}
                \If{$\sigma_{k}$ is a face of $\sigma_{k+1}$}
                    \State Construct $rn_2 \xleftarrow[]{}$ row index of $\sigma_{k}$ in $B_{k+1}$
                     \State Construct $cn_2 \xleftarrow[]{}$ column index of $\sigma_{k+1}$ in $B_{k+1}$
                     \State Set $B_{k+1}[rn_2, cn_2] \xleftarrow[]{} 1$
                \EndIf
            \EndFor
        \EndFor
    \EndFor
    \State Compute $\mathcal{L}_k = B_{k+1}B_{k+1}^T + B_{k}^TB_{k}$ and $\bl, \bpsi$ using $ \mathcal{L}_k \bpsi = \bpsi\bl$
    \State \Return $\bl$, $\bpsi$
\EndIf
\State Transform $f$ to coordinate system of eigenvectors $f_0 =\bpsi^Tf$
\For {$t = 0$ $to$  $T$, increment = i}
    \State Compute $\mathbf{g} =  f_0 \odot exp(\sminus \bl t)$ \Comment{$\odot$  element-wise multiplication}
    \State Transform back to original coordinates $\mathbf{g} = \bpsi \mathbf{g}$
\EndFor
\State \Return $\mathbf{g}$
\end{algorithmic}
\end{breakablealgorithm}

In the preceding algorithm, we note that both diffusion on 0-simplices and diffusion on 1-simplices can be conducted based of line (5) - (24). But line (1) - (4) reduces greatly, the computation time when only diffusion on 0-simplices is required.

To illustrate diffusion over 0-simplices, we simulated a 1-skeleton (graph) of 200 0-simplices(nodes) in two modules with intramodule connectivity probability of $\pi = 0.19$, perturbed the connectivity with noise drawn from a Gaussian with mean $1$ and standard deviation $0.25$. The initial node data were drawn from a multinomial distribution with equal probability and scaled by a factor of five (5). A diffusion bandwidth of $0$, $0.5$, and $1.0$ units were used to smoothed over the nodes. Figure~\ref{fig:node-smoothing} shows the results from the 0-simplices smoothing.
\begin{figure*}[ht]
 \centering
 \includegraphics[width =\textwidth]{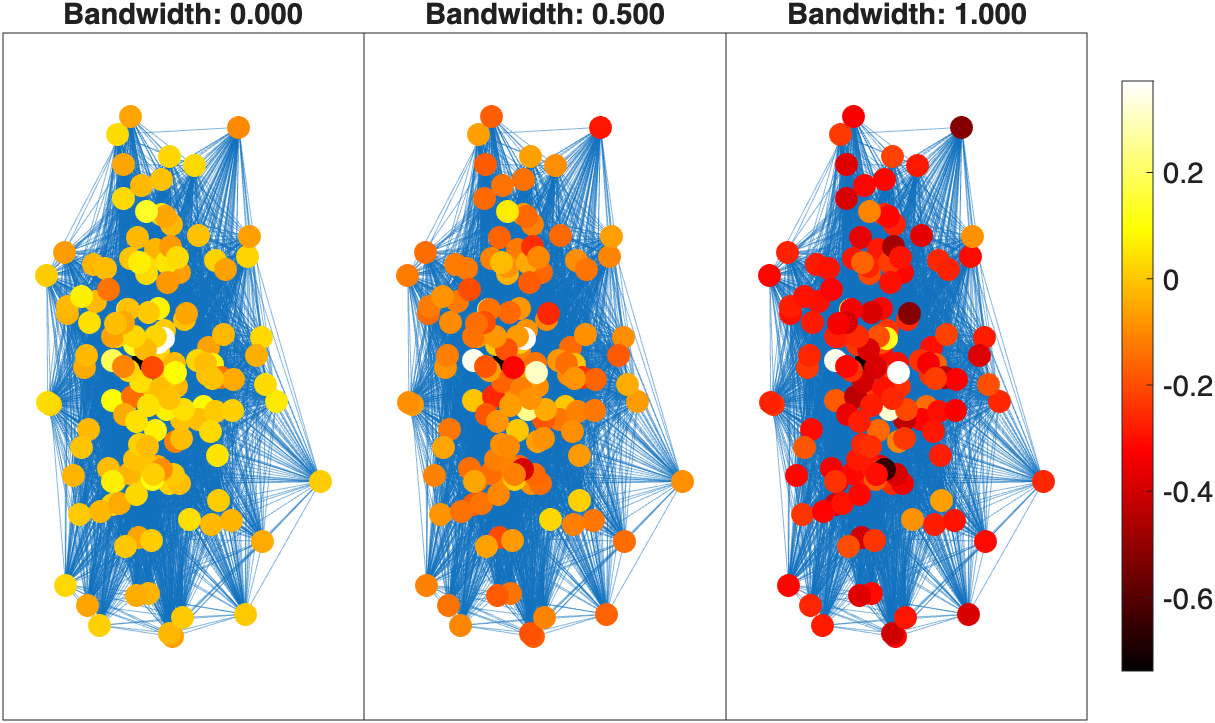}
 \caption{An illustration of the simulated and smoothed data. On the far-left is the collection of 150 0-simplices with the color gradient indicating the initial concentration at each of the nodes. We notice the color spectrum is widely varying as expected. In the middle and far-right is the resulting 0-simplices after smoothing via the heat diffusion equation. Its apparent that there is a greater uniformity in the distribution of the color gradient compared to the unsmoothed one. This signals the denoising of the initial functional data defined on the nodes.}
 \label{fig:node-smoothing}
\end{figure*}

Similarly, to illustrate diffusion on 1-simplices, we simulated a collection of 1-simplices in two blocks and connected with an intramodule connectivity probability of $\pi = 0.95$. All other parameter settings remain the same as in the 0-simplex diffusion setup. The results are shown in Figure~\ref{fig:edge-smoothing}.
\begin{figure*}[ht]
 \centering
 \includegraphics[width =\textwidth]{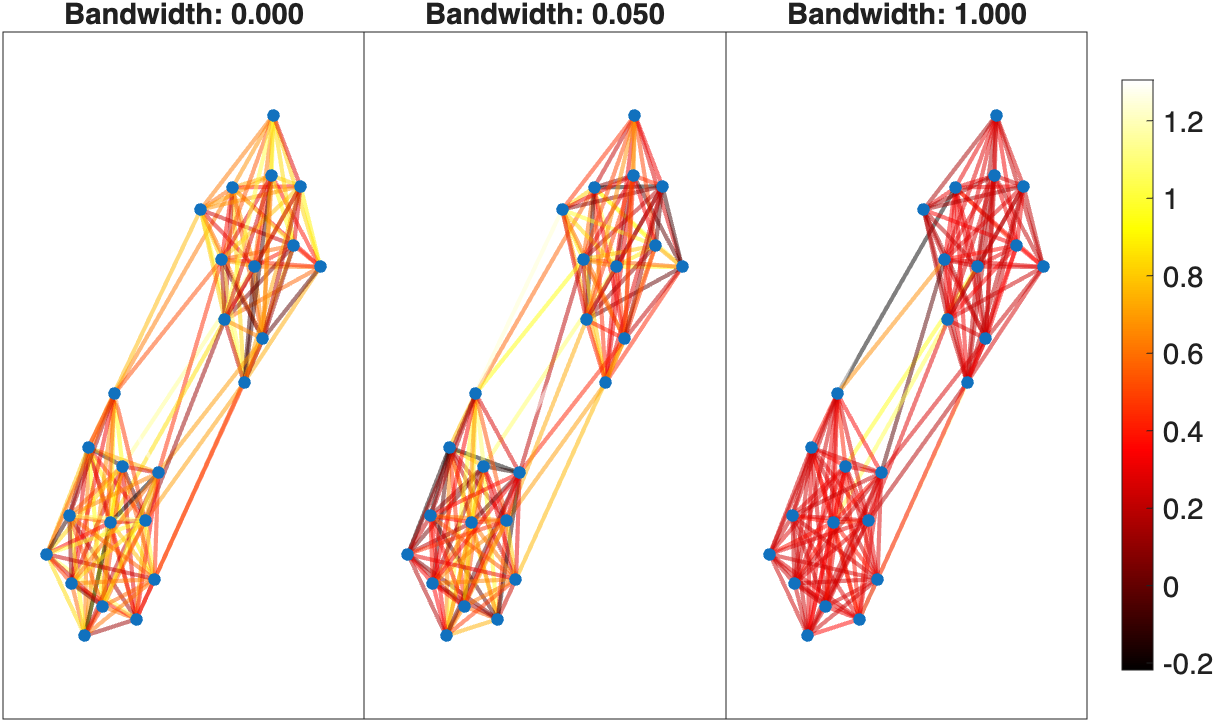}
 \caption{This illustrate the diffusion process on 1-simplices. On the far-left is the collection of 1-simplices with the initial unsmoothed functional data. Again the color gradient is quite varying. The middle and far-right graph depicts the output when the data is denoised. We observe that color gradient across the 1-simplices is relatively uniform compared to the unsmoothed data.}
 \label{fig:edge-smoothing}
\end{figure*}
The significance of these simulations underscores the power of diffusion via the heat equation to denoise functional data defined either on 0-simplices(nodes) or 1-simplices(edges). The next section generalizes this result for general k-simplices. 

\subsection{Representation of k-cycles}

The spectral decomposition of the k-th Hodge Laplacian $\calL_k$ is given by \bse \calL_k \U_k = \U_k \bL_k   \ese
where $\bL$ is a diagonal matrix with the diagonal entries been the eigenvalues of $\calK_k$ with corresponding eigenvectors been the columns of the matrix $\U_k$. The Betti number $\beta_k$ can now be obtained as the multiplicity of the zero eigenvalues associated with the decomposition. In the special case of $k = 1$, it has been shown that, the $m$-th 1-cycle can be written as a linear combination of edges and the vectors of $\U_1$ corresponding to the zero eigenvalues. A straightforward generalization of this result for any arbitrary $k$ can be constructed. 

We can characterize the $k$-cycles as a linear combination of the vectors in $\U_k$ corresponding to the zero eigenvalues. Each eigenvector corresponding to the zero eigenvalue represents $k$-cycles. Then the $m$-th $k$-cycle is given by 
\bq \calC^m = \sum_{\sigma_k \in \calK_k} u_{0}(k) \sigma_k \label{eqn:k-cyles-all}\eq
where $\sigma_k$ is a $k$-simplex with the corresponding $k$-th entry in $\U_k$, and $u_{0}(k)$ is an eigenvector (a column of $\U_k$) corresponding to a zero eigenvalue.

Similar to the realization in \cite{anand2021hodge}, even k-simplices that do not form part of the k-cycle are considered in (\ref{eqn:k-cyles-all}). The way to solve this problem for the special case of $k = 1$ is to sequentially construct the 1-cycles drawing from the theory of spanning trees \cite{anand2021hodge, busaryev2012annotating}. This approach is not feasible for the case where $k > 1$. But we can have an analogous construction that retains the core property that the entries of the eigenvectors will be zero for k-simplices that are not part of the cycle.

Recall in Algorithm \ref{alg:diffons-simplices}, we defined the boundary matrix $\B_k$ to have dimension $|\calK_{k-1}| \times |\calK_{k}|$. The columns of $B_k$ corresponds to the $k$-$1$ boundary of the $k$-simplex $\sigma_j$, $j = 1, \dots, |\calK_{k}|$. If we let $r = rank(\B_k)$, then there exists some permutation matrix $\P$ such that 
\bq \B_k = \Q[ \I_r | \R] \P \eq
where $[ \boldsymbol{.} | \boldsymbol{.} ]$ denotes the horizontal concatenation of two matrices \cite{busaryev2012annotating}. The matrix $\I_r$ is an $r\times r$ identity matrix, $\Q$ is a $|\calK_{k-1}| \times r$ matrix with linearly independent columns, $\R$ is a $r\times |\calK_{k}| - r$ matrix. Note that if $\B_k$ has full column rank, then $\B_k = \Q$ in which case the columns of $\B_k$ form a basis. If we assume otherwise, then the columns of $\Q$ form a basis for the column space of $\B_k$. We also note that $\P\P^T = I$, which implies the permutation matrix can act on the columns of $\B_k$ such that the the first $r$-columns of $\B_k\P$ represents $Q$.

Now we observe that the first $r$ $k$-simplices $T = \{ \sigma_j, \dots, \sigma_r \}$ will have linearly independent boundaries, hence any subset $\{ \sigma_j \}_{j=1}^{k}, k \le r$ will not form a $k$-cycle. Let $\calK_k^\prime = \{ \sigma: \sigma \in \calK_k, \sigma \not\in T \}$. Analogous to the 1-skeleton case, adding any simplex $\sigma \in \calK_k^\prime$ to $T$ will form a k-cycle. The cycle formed by $\sigma_{r +j} \cup T, j = 1 , \dots, |\calK_k|$ only contains one simplex from $\calK_k^\prime$. This now allows to rewrite $(\ref{eqn:k-cyles-all})$ as

\bq  \calC^m = \sum_{\sigma_k \in \calK_k^\prime} \mu_0(k)\sigma_k \label{eqn:k-cycles-relevant} \eq
where we assume that for any $\sigma \in T$, the cycle formed by $T \cup \sigma$ is set to 0.

\begin{figure*}[ht]
     \centering
     \begin{subfigure}[t]{0.325\linewidth}
         \centering
         \includegraphics[width=\textwidth]{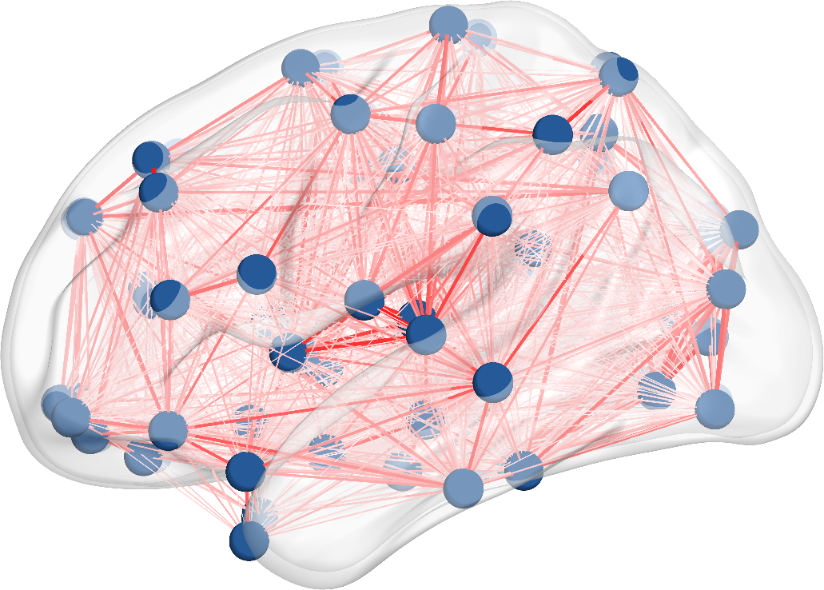}
     \end{subfigure}
     \hfill
     \begin{subfigure}[t]{0.325\linewidth}
         \centering
         \includegraphics[width=0.75\textwidth,height=0.75\textwidth]{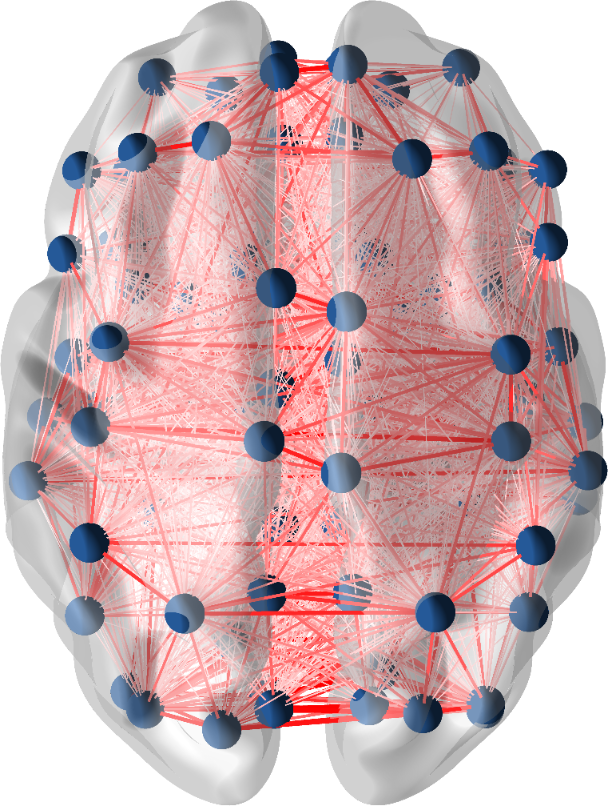}
     \end{subfigure}
     \hfill
     \begin{subfigure}[t]{0.325\linewidth}
         \centering
         \includegraphics[width=\textwidth]{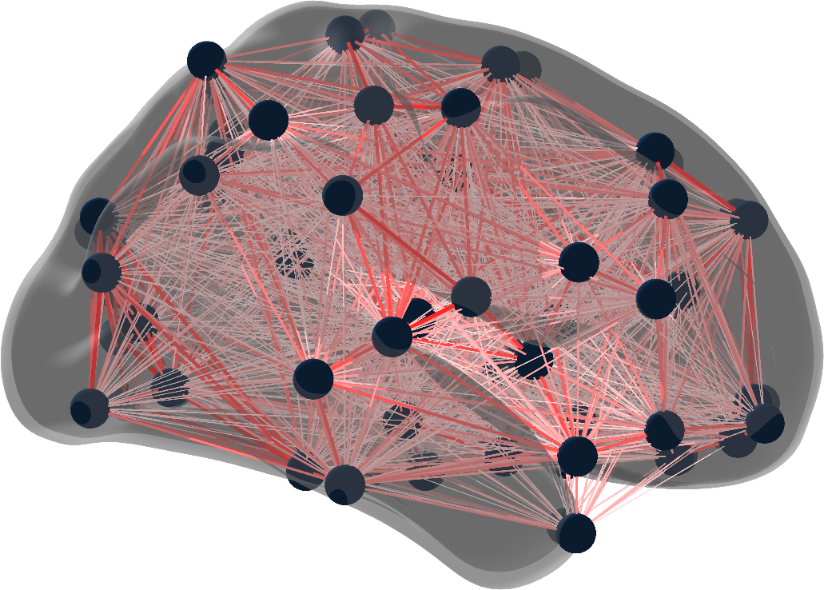}
     \end{subfigure}
     
    \par\vspace{0.5cm}
    
     \centering
     \begin{subfigure}[t]{0.325\linewidth}
         \centering
         \includegraphics[width=\textwidth]{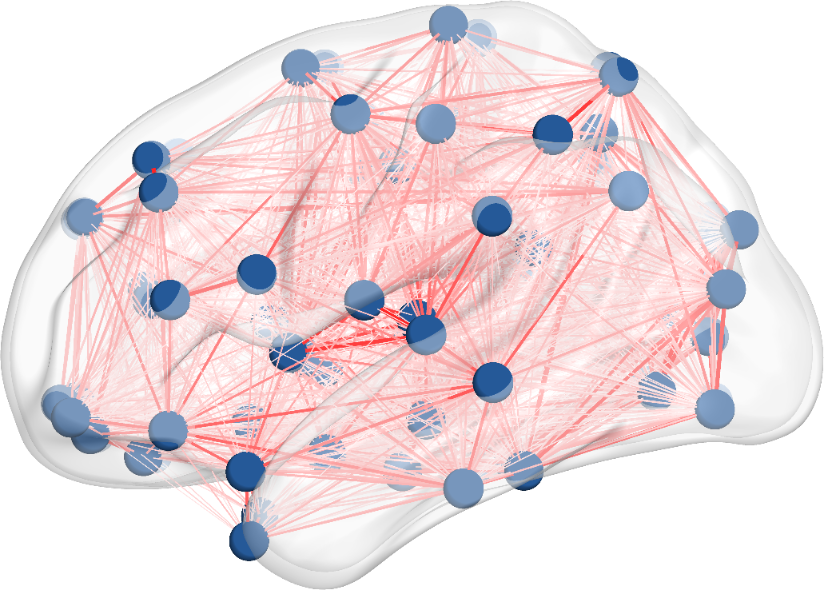}
     \end{subfigure}
     \hfill
     \begin{subfigure}[t]{0.325\linewidth}
         \centering
         \includegraphics[width=0.75\textwidth,height=0.75\textwidth]{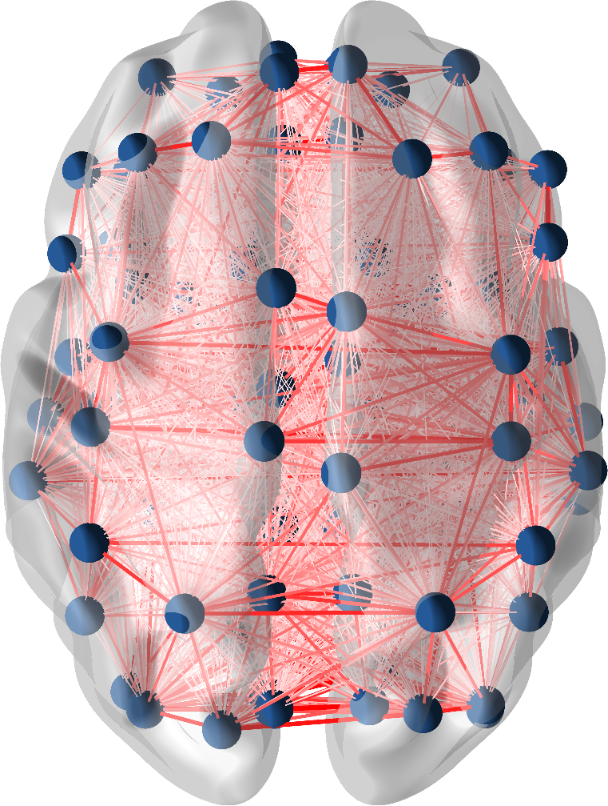}
     \end{subfigure}
     \hfill
     \begin{subfigure}[t]{0.325\linewidth}
         \centering
         \includegraphics[width=\textwidth]{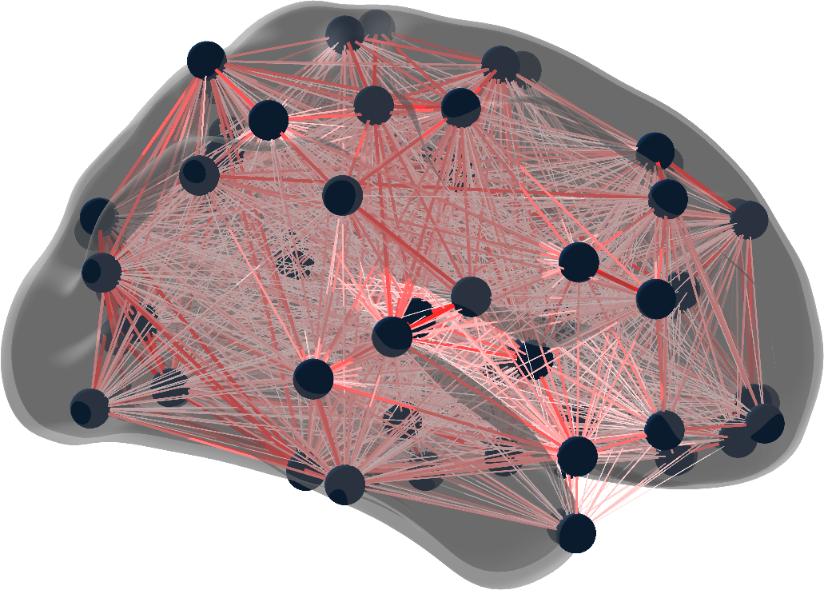}
     \end{subfigure}
      \begin{subfigure}[t]{1\linewidth}
         \centering
         \includegraphics[width=0.75\textwidth]{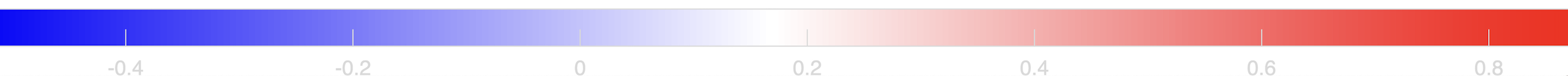}
     \end{subfigure}
    \caption{Top: Group-mean functional connectivity of users on a glass brain. Relatively few negative connections were observed. Bottom: Group-mean functional connectivity of non-users on a glass brain. Relatively few negative connections were observed. The dense representation of the connectivities gives no visibly discernible pattern.}
    \label{fig:mta-raw-non-use}
\end{figure*}

\section{Application to Human Brain Networks} \label{sec:app}
The proposed framework is applied to real data with an edge-domain analysis of human resting-state fMRI. The dataset is provided as parcel-wise functional connectivity networks (see figure~\ref{fig:mta-raw-non-use}), so we focus on edge smoothing, which regularizes the connectivity signal on the graph while preserving topology. Applied to group-average networks from the two cohorts, diffusion in edge space reduces isolated, noise-like links and clarifies anatomically coherent patterns. This application illustrates the practical value of edge-domain processing for functional connectomics.

\subsection{Data and Preprocessing}
The data analyzed is the Addiction Connectome Preprocessed Initiative (ACPI) distribution of the \emph{Multimodal Treatment of ADHD (MTA)} resting-state fMRI resource released through FCP/INDI. This is a derivative, preprocessed release: raw scans were curated by the project and distributed both as standardized 4D volumes and as parcellated region–wise time series. The parcellated time series in the Automated Anatomical Labeling (AAL) atlas was used for this analysis \cite{tzourio2002automated}. The data is in two groups based on marijuana use status, group that regularly use cannabis (users) and another group that did not regularly use cannabis (non-users). 

Preprocessing in ACPI is provided as a small, explicit factorial set of pipelines that differ only in three choices made after standard motion correction and normalization to MNI space. The first factor is the nonlinear spatial normalization tool (\textit{ANTS} or \textit{FNIRT}); the second is whether high–motion volumes were censored; the third is whether global signal regression was applied. Each subject therefore has up to eight variants, all parcellated to the same AAL ROI set. We selected a single variant \emph{a priori} for all subjects to avoid mixing preprocessing choices across groups. For each subject we transformed the ROI timeseries by removing ROI means, and computing the Pearson correlation between every ROI pair to obtain a symmetric AAL connectivity matrix. Correlations were Fisher $z$–transformed prior to group averaging and back–transformed for reporting and visualization. No additional denoising, filtering, spatial smoothing, or regression was performed beyond the ACPI pipeline settings. This design keeps the preprocessing provenance explicit and comparable across subjects while providing standard parcel–level functional networks for the subsequent edge–domain analyses.

\begin{figure*}[ht]
     \centering
     \begin{subfigure}[t]{0.325\linewidth}
         \centering
         \includegraphics[width=\textwidth]{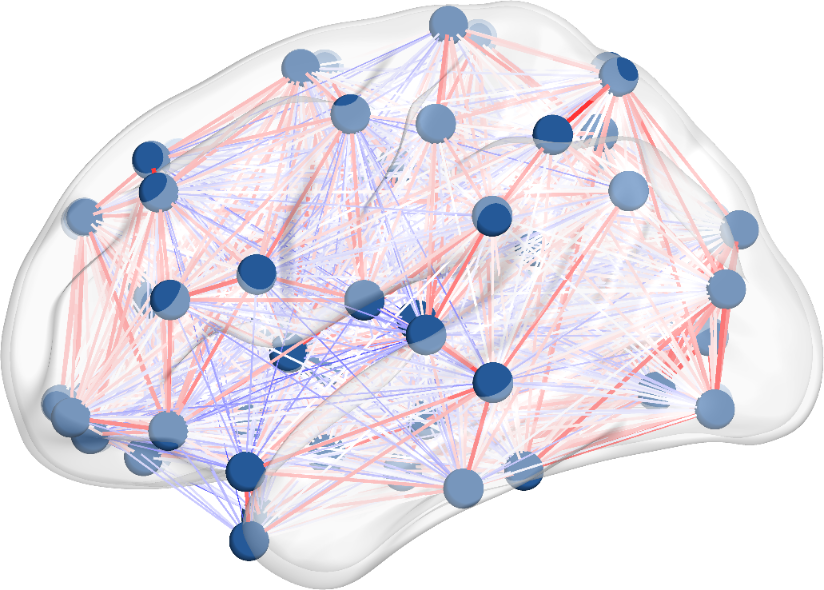}
     \end{subfigure}
     \hfill
     \begin{subfigure}[t]{0.325\linewidth}
         \centering
         \includegraphics[width=0.75\textwidth,height=0.75\textwidth]{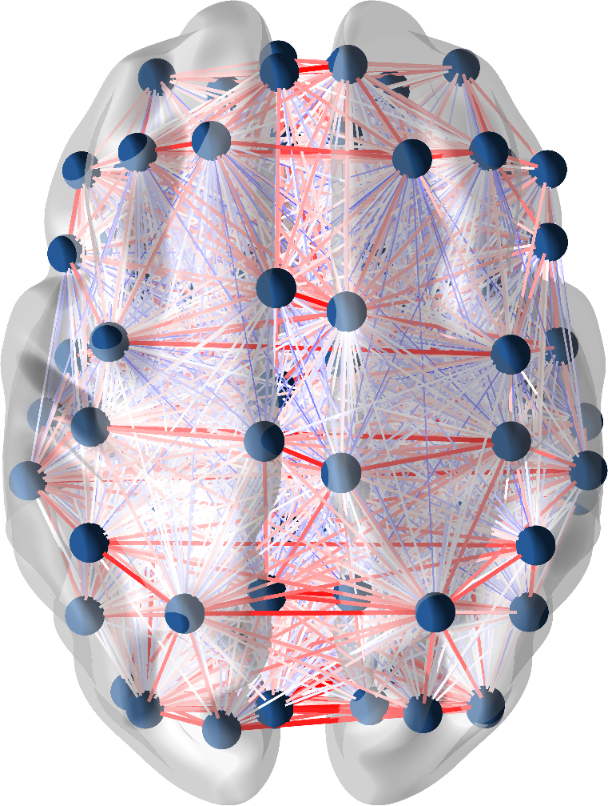}
     \end{subfigure}
     \hfill
     \begin{subfigure}[t]{0.325\linewidth}
         \centering
         \includegraphics[width=\textwidth]{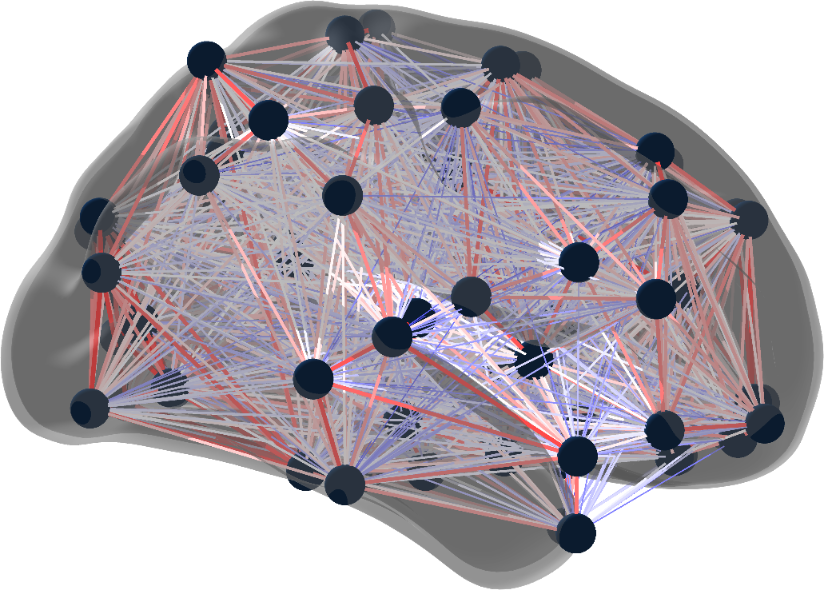}
     \end{subfigure}
     
    \par\vspace{0.5cm}
    
     \centering
     \begin{subfigure}[t]{0.325\linewidth}
         \centering
         \includegraphics[width=\textwidth]{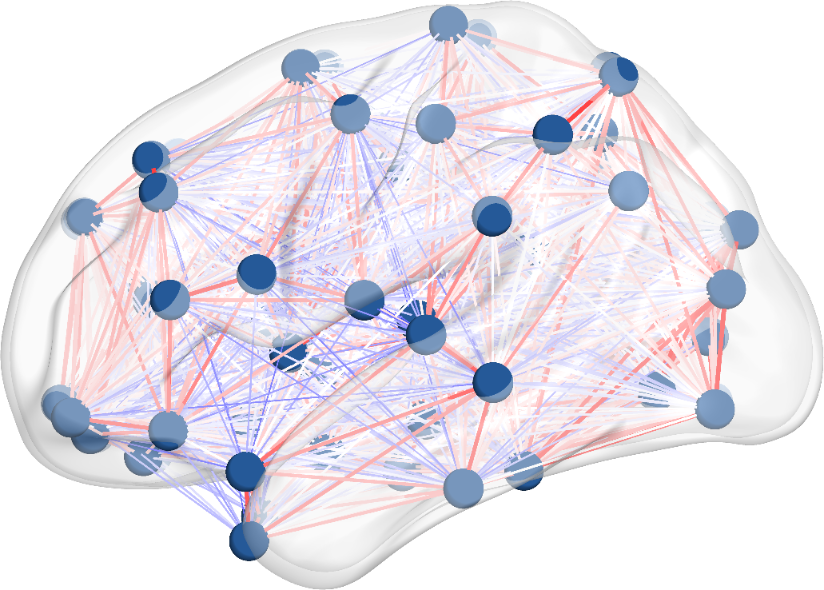}
     \end{subfigure}
     \hfill
     \begin{subfigure}[t]{0.325\linewidth}
         \centering
         \includegraphics[width=0.75\textwidth,height=0.75\textwidth]{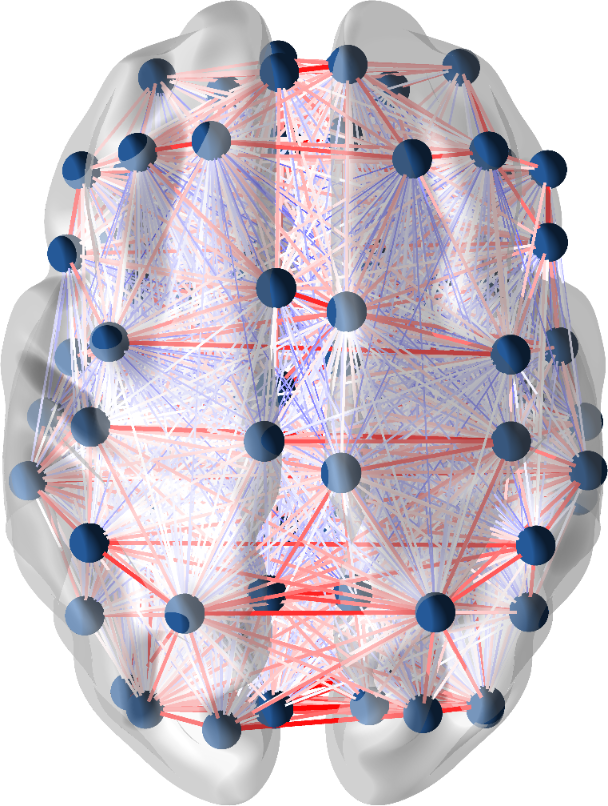}
     \end{subfigure}
     \hfill
     \begin{subfigure}[t]{0.325\linewidth}
         \centering
         \includegraphics[width=\textwidth]{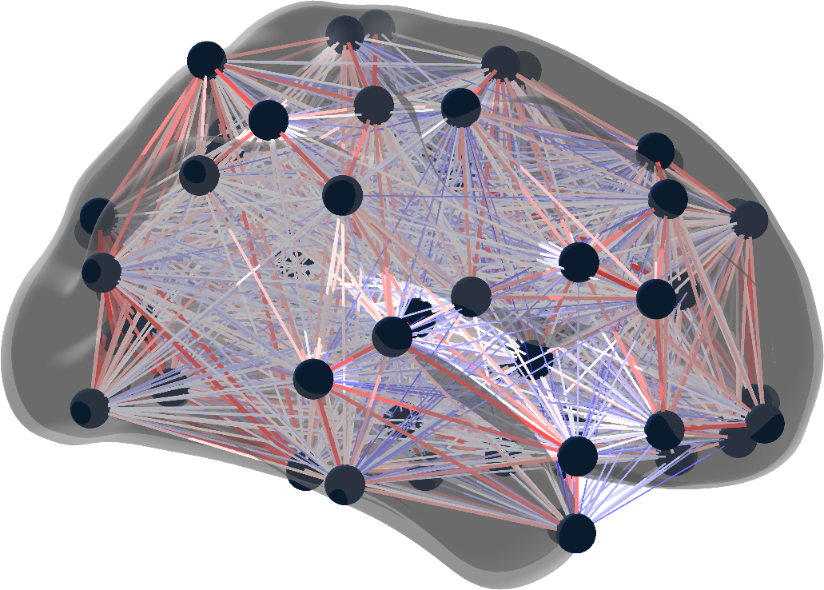}
     \end{subfigure}
      \begin{subfigure}[t]{1\linewidth}
         \centering
         \includegraphics[width=0.75\textwidth]{cb_use1.png}
     \end{subfigure}
    \caption{Top: he functionality connectivity after smoothing along the 1-simplices (edges) with bandwidth $t=0.05$ of the mean connectivity for users. Observe that spurious isolated connections are attenuated (connectivity reduced) while coherent bundles are enhanced. Bottom: The functionality connectivity after smoothing along the 1-simplices (edges) with bandwidth $t=0.05$ of the mean connectivity for non-users. Observe that spurious isolated connections are attenuated (connectivity reduced) while coherent bundles are enhanced.}
    \label{fig:mta-smooth-non-use2}
\end{figure*}

\subsection{Results}
The brain networks in Figure~\ref{fig:mta-raw-non-use} shows the unsmoothed group–mean connectivity for users and non–users respectively. The matrices display the expected within–lobe clustering and interhemispheric homologous links. Observe the dense representation of the connectivities with no visibly discernible pattern. Neuroimaging studies have shown how cannabis affects brain connectivity patterns. Studies examining functional brain connectomes have revealed that both acute and chronic cannabis exposure alter resting-state connectivity, with particular emphasis on networks involved in executive control and self-referential processing\cite{ramaekers2022functional}. THC has been shown to disrupt the DMN, with the posterior cingulate cortex (PCC) being a key region involved in the subjective experience of THC intoxication \cite{wall2019dissociable}. The results in the next section demonstrate that heat kernel smoothing enhance these specific regional patterns without imposing artificial boundaries, which could be very useful in studying such pharmacologically-induced changes.

The functionality connectivity after smoothing along the 1-simplices (edges) are show in Figure~\ref{fig:mta-smooth-non-use2}-top for users and in Figure~\ref{fig:mta-smooth-non-use2}-bottom for the non-users. After edge smoothing, spurious isolated connections are attenuated (reduced connectivity) while coherent bundles are enhanced, yielding clearer interhemispheric and intra–network structure.

Stronger connections can further be realized by increasing the bandwidth as shown in Figure~\ref{fig:mta-smooth-non-use3}.
\begin{figure*}[ht!]
     \centering
     \begin{subfigure}[t]{0.325\linewidth}
         \centering
         \includegraphics[width=\textwidth]{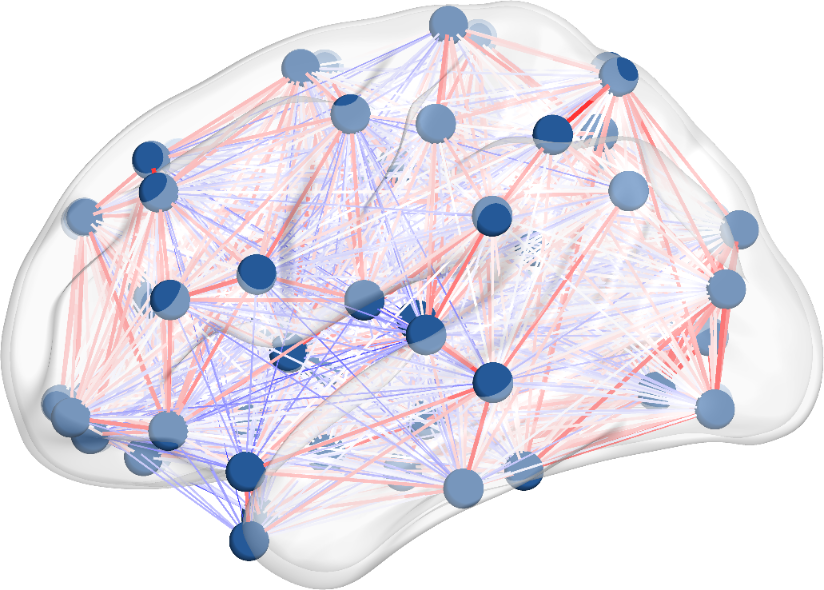}
     \end{subfigure}
     \hfill
     \begin{subfigure}[t]{0.325\linewidth}
         \centering
         \includegraphics[width=0.75\textwidth,height=0.75\textwidth]{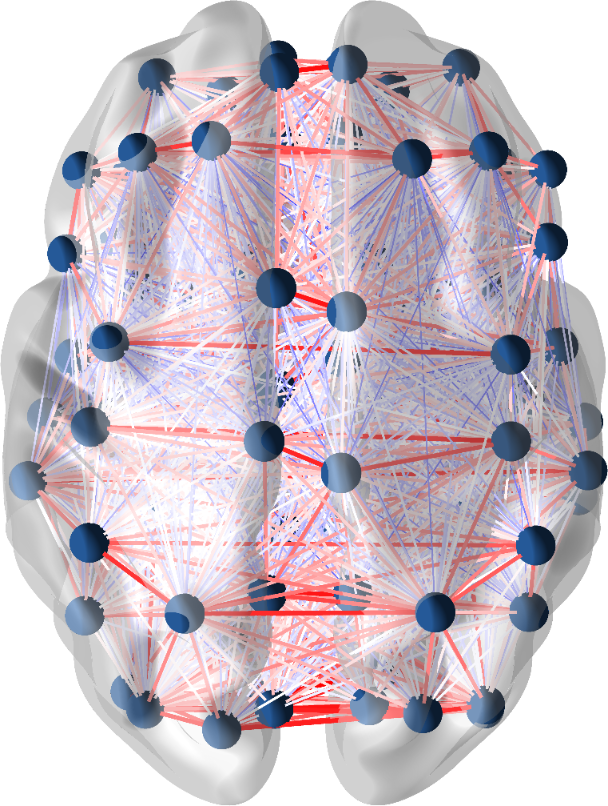}
     \end{subfigure}
     \hfill
     \begin{subfigure}[t]{0.325\linewidth}
         \centering
         \includegraphics[width=\textwidth]{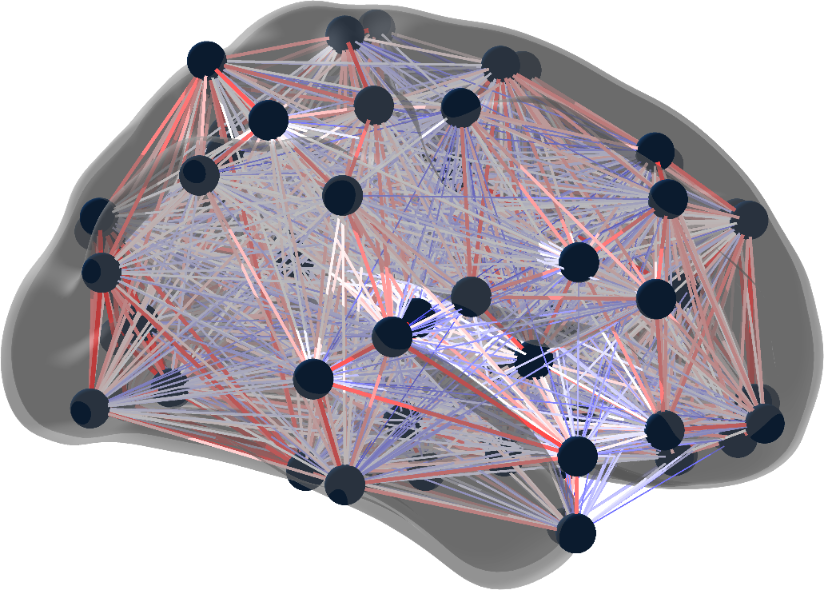}
     \end{subfigure}

    \par\vspace{0.5cm}
    
     \centering
     \begin{subfigure}[t]{0.325\linewidth}
         \centering
         \includegraphics[width=\textwidth]{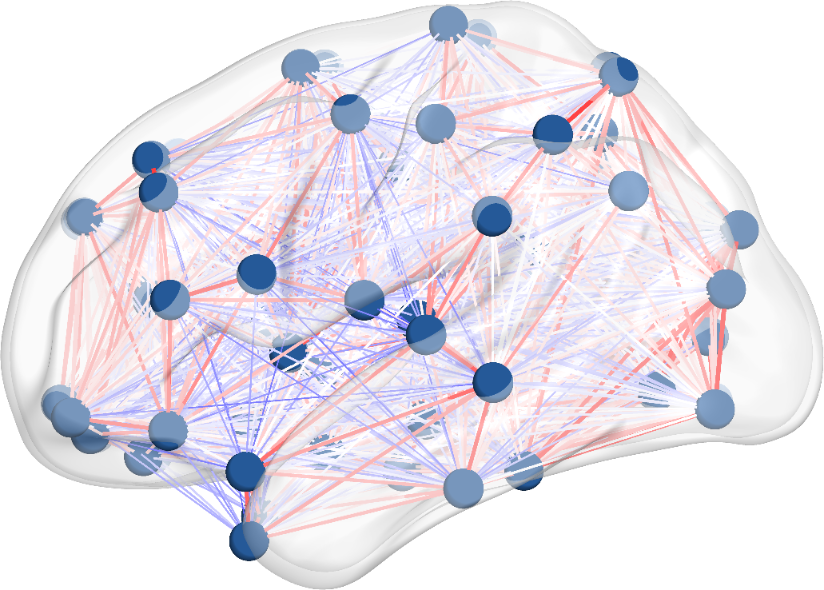}
     \end{subfigure}
     \hfill
     \begin{subfigure}[t]{0.325\linewidth}
         \centering
         \includegraphics[width=0.75\textwidth,height=0.75\textwidth]{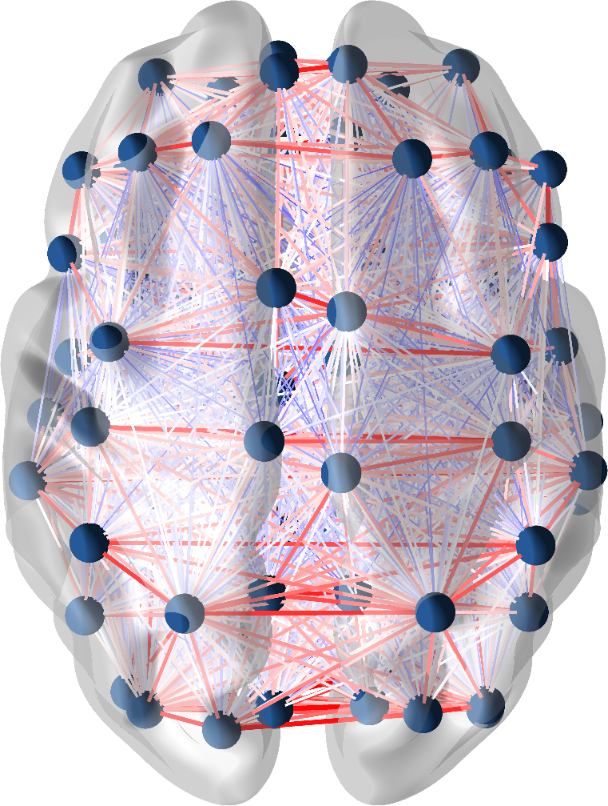}
     \end{subfigure}
     \hfill
     \begin{subfigure}[t]{0.325\linewidth}
         \centering
         \includegraphics[width=\textwidth]{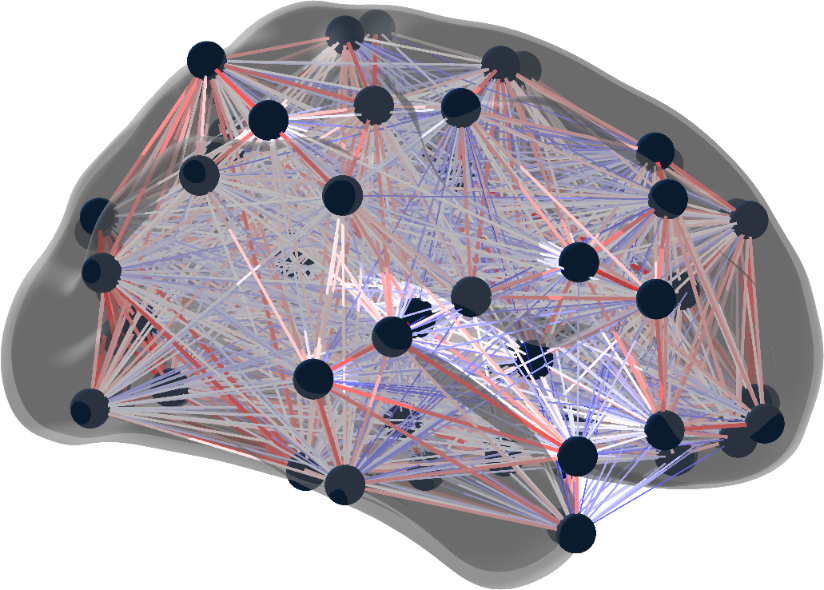}
     \end{subfigure}
      \begin{subfigure}[t]{1\linewidth}
         \centering
         \includegraphics[width=0.75\textwidth]{cb_use1.png}
     \end{subfigure}
    \caption{Top: The functionality connectivity after smoothing along the 1-simplices (edges) with bandwidth $t=0.1$ of the mean connectivity for users. Observe that spurious isolated connections are more attenuated compared to the case where the bandwidth is $t=0.05$. This suppresses isolated noisy connections and sharpens coherent patterns without imposing modular structure or hard thresholds. Bottom: The functionality connectivity after smoothing along the 1-simplices (edges) with bandwidth $t=0.1$ of the mean connectivity for non-users. Observe that spurious isolated connections are more attenuated compared to the case where the bandwidth is $t=0.05$. This suppresses isolated noisy connections and sharpens coherent patterns without imposing modular structure or hard thresholds.}
    \label{fig:mta-smooth-non-use3}
\end{figure*}
The bandwidth parameter selection ($t = 0.05$ and $t = 0.1$) in our analysis corresponds to different scales of network organization. At $t = 0.05$, we observe local edge regularization that preserves fine-grained connectivity patterns while reducing measurement noise. This scale is particularly relevant for detecting subtle alterations in cortico-cortical connections that may reflect compensatory mechanisms in chronic users. At $t = 0.1$, the smoothing reveals larger-scale network architecture, emphasizing major information highways between functional systems.
A direct comparison of the diffused brain networks at bandwidth $t = 0.05$ and bandwidth $t = 0.1$ highlights differences concentrated in fronto–striatal and default–mode edges; these contrasts are uniform in the raw averages but become more spatially organized after diffusion. Glass brain renderings with edges colored by weight visually confirm that smoothing suppresses scattered weak links and emphasizes anatomically plausible tracts. This brain networks illustrations demonstrates the practical application of the proposed kernel smoothing methodology.

\section{Discussion and conclusion} \label{sec:disc}
Simplicial diffusion offers a simple, topology–aware regularizer for functional networks. Applied to the MTA group averages, it suppresses isolated noisy connections and sharpens coherent patterns without imposing modular structure or hard thresholds. Relative to unsmoothed means, the smoothed contrasts between users and non–users are more spatially contiguous and anatomically interpretable. Because the heat kernel operator preserves the simplex-space topology, the procedure does not create spurious $k$-cycles and can be coupled with downstream topological summaries (e.g., loop/non–loop decompositions) for further analysis \cite{anand2021hodge}. Limitations include the choice of bandwidth $t$ and the reliance on a single atlas. In practice $t$ can be selected by minimizing reconstruction error on held–out simplices or by inspecting stability across a small grid; multi–atlas analyses can assess robustness. The present demonstration uses group means, future work would explore subject-level smoothing with appropriate statistical frameworks for group comparisons, potentially using permutation testing on smoothed edge weights or topological features. Integration with dynamic connectivity analyses could reveal how cannabis affects the temporal evolution of network states, with heat kernel smoothing potentially stabilizing time-varying connectivity estimates \cite{dakurah2025robust,dakurah2024subsequence}. Overall, simplex smoothing improves signal-to-noise in functional connectomes with minimal assumptions, producing clearer summaries for visualization and group comparison.

\section{Acknowledgments and Data Availability}
The authors are grateful to the National Institute on Drug Abuse and the Addiction Connectome Preprocessed Initiative (ACPI) for making the functional brain imaging data freely accessible for this work. The neuroimaging data used in this study were obtained from the ADHD-200 Consortium and the ACPI MTA-1 dataset, which are publicly available through the 1000 Functional Connectomes Project at \url{https://fcon_1000.projects.nitrc.org/indi/ACPI/html/acpi_mta_1.html}. The preprocessing pipelines are also publicly available on the 1000 Functional Connectomes Project website.

\bibliographystyle{plain}
\bibliography{references}

@book{bondy1991graph,
  title={Graph Theory with Applications},
  author={Bondy, A.J. and Murty, U.S.R.},
  isbn={9780471363248},
  url={https://books.google.com/books?id=7EWKkgEACAAJ},
  year={1991},
  pages={1--21},
  publisher={Wiley}
}

@book{gross2005graph,
  title={Graph Theory and Its Applications, Second Edition},
  author={Gross, J.L. and Yellen, J.},
  isbn={9781584885054},
  lccn={2005052906},
  series={Textbooks in Mathematics},
  url={https://books.google.com/books?id=-7Q\_POGh-2cC},
  year={2005},
  publisher={Taylor \& Francis}
}

@article{topaz2015topological,
  title={Topological data analysis of biological aggregation models},
  author={Topaz, Chad M and Ziegelmeier, Lori and Halverson, Tom},
  journal={PloS one},
  volume={10},
  number={5},
  pages={e0126383},
  year={2015},
  publisher={Public Library of Science San Francisco, CA USA}
}

@book{hatcher2002algebraic,
  title={Algebraic Topology},
  author={Hatcher, A. and Cambridge University Press and Cornell University. Department of Mathematics},
  isbn={9780521795401},
  lccn={00065166},
  series={Algebraic Topology},
  url={https://books.google.com/books?id=BjKs86kosqgC},
  year={2002},
  publisher={Cambridge University Press}
}

@article{lee2012persistent,
  title={Persistent brain network homology from the perspective of dendrogram},
  author={Lee, Hyekyoung and Kang, Hyejin and Chung, Moo K and Kim, Bung-Nyun and Lee, Dong Soo},
  journal={IEEE transactions on medical imaging},
  volume={31},
  number={12},
  pages={2267--2277},
  year={2012},
  publisher={IEEE}
}

@inproceedings{barbarossa2018learning,
  title={Learning from signals defined over simplicial complexes},
  author={Barbarossa, Sergio and Sardellitti, Stefania and Ceci, Elena},
  booktitle={2018 IEEE Data Science Workshop (DSW)},
  pages={51--55},
  year={2018},
  organization={IEEE}
}

@article{chung2015unified,
  title={Unified heat kernel regression for diffusion, kernel smoothing and wavelets on manifolds and its application to mandible growth modeling in CT images},
  author={Chung, Moo K and Qiu, Anqi and Seo, Seongho and Vorperian, Houri K},
  journal={Medical image analysis},
  volume={22},
  number={1},
  pages={63--76},
  year={2015},
  publisher={Elsevier}
}

@book{sunada2012topological,
  title={Topological Crystallography: With a View Towards Discrete Geometric Analysis},
  author={Sunada, T.},
  isbn={9784431541776},
  lccn={2012953120},
  series={Surveys and Tutorials in the Applied Mathematical Sciences},
  url={https://books.google.com/books?id=6cNEAAAAQBAJ},
  year={2012},
  publisher={Springer Japan}
}

@incollection{adler2010persistent,
  title={Persistent homology for random fields and complexes},
  author={Adler, Robert J and Bobrowski, Omer and Borman, Matthew S and Subag, Eliran and Weinberger, Shmuel},
  booktitle={Borrowing strength: theory powering applications--a Festschrift for Lawrence D. Brown},
  pages={124--143},
  year={2010},
  publisher={Institute of Mathematical Statistics}
}

@article{garlaschelli2009weighted,
  title={The weighted random graph model},
  author={Garlaschelli, Diego},
  journal={New Journal of Physics},
  volume={11},
  number={7},
  pages={073005},
  year={2009},
  publisher={IOP Publishing}
}

@incollection{bollobas1985random,
  title={Random graphs of small order},
  author={Bollob{\'a}s, B{\'e}la and Thomason, Andrew},
  booktitle={North-Holland Mathematics Studies},
  volume={118},
  pages={47--97},
  year={1985},
  publisher={Elsevier}
}

@article{gilbert1959random,
  title={Random graphs},
  author={Gilbert, Edgar N},
  journal={The Annals of Mathematical Statistics},
  volume={30},
  number={4},
  pages={1141--1144},
  year={1959},
  publisher={JSTOR}
}

@article{erdos1960evolution,
  title={On the evolution of random graphs},
  author={Erdos, Paul and R{\'e}nyi, Alfr{\'e}d and others},
  journal={Publ. Math. Inst. Hung. Acad. Sci},
  volume={5},
  number={1},
  pages={17--60},
  year={1960},
  publisher={Citeseer}
}

@article{barabasi1999emergence,
  title={Emergence of scaling in random networks},
  author={Barab{\'a}si, Albert-L{\'a}szl{\'o} and Albert, R{\'e}ka},
  journal={science},
  volume={286},
  number={5439},
  pages={509--512},
  year={1999},
  publisher={American Association for the Advancement of Science}
}

@article{newman2006modularity,
  title={Modularity and community structure in networks},
  author={Newman, Mark EJ},
  journal={Proceedings of the national academy of sciences},
  volume={103},
  number={23},
  pages={8577--8582},
  year={2006},
  publisher={National Acad Sciences}
}

@article{songdechakraiwut2020topological,
  title={Topological Learning for Brain Networks},
  author={Songdechakraiwut, Tananun and Chung, Moo K},
  journal={arXiv preprint arXiv:2012.00675},
  year={2020}
}

@inproceedings{huang2019fast,
  title={Fast polynomial approximation to heat diffusion in manifolds},
  author={Huang, Shih-Gu and Lyu, Ilwoo and Qiu, Anqi and Chung, Moo K},
  booktitle={International Conference on Medical Image Computing and Computer-Assisted Intervention},
  pages={48--56},
  year={2019},
  organization={Springer}
}

@article{RICAUD2019474,
title = {Fourier could be a data scientist: From graph Fourier transform to signal processing on graphs},
journal = {Comptes Rendus Physique},
volume = {20},
number = {5},
pages = {474-488},
year = {2019},
note = {Fourier and the science of today / Fourier et la science d’aujourd’hui},
issn = {1631-0705},
doi = {https://doi.org/10.1016/j.crhy.2019.08.003},
url = {https://www.sciencedirect.com/science/article/pii/S1631070519301094},
author = {Benjamin Ricaud and Pierre Borgnat and Nicolas Tremblay and Paulo Gonçalves and Pierre Vandergheynst},
keywords = {Graph signal processing, Fourier transform, Wavelets, Data science, Machine learning, Traitement du signal sur graphe, Transformée de Fourier, Ondelettes, Science des données, Apprentissage machine},
}

@article{schaub2020random,
  title={Random walks on simplicial complexes and the normalized hodge 1-laplacian},
  author={Schaub, Michael T and Benson, Austin R and Horn, Paul and Lippner, Gabor and Jadbabaie, Ali},
  journal={SIAM Review},
  volume={62},
  number={2},
  pages={353--391},
  year={2020},
  publisher={SIAM}
}

@inproceedings{muhammad2006control,
  title={Control using higher order Laplacians in network topologies},
  author={Muhammad, Abubakr and Egerstedt, Magnus},
  booktitle={Proc. of 17th International Symposium on Mathematical Theory of Networks and Systems},
  pages={1024--1038},
  year={2006},
  organization={Citeseer}
}

@article{hagmann2008mapping,
  title={Mapping the structural core of human cerebral cortex},
  author={Hagmann, Patric and Cammoun, Leila and Gigandet, Xavier and Meuli, Reto and Honey, Christopher J and Wedeen, Van J and Sporns, Olaf},
  journal={PLoS biology},
  volume={6},
  number={7},
  pages={e159},
  year={2008},
  publisher={Public Library of Science San Francisco, USA}
}

@article{macdonald2000automated,
  title={Automated 3-D extraction of inner and outer surfaces of cerebral cortex from MRI},
  author={MacDonald, David and Kabani, Noor and Avis, David and Evans, Alan C},
  journal={NeuroImage},
  volume={12},
  number={3},
  pages={340--356},
  year={2000},
  publisher={Elsevier}
}

@article{ortega2018graph,
  title={Graph signal processing: Overview, challenges, and applications},
  author={Ortega, Antonio and Frossard, Pascal and Kova{\v{c}}evi{\'c}, Jelena and Moura, Jos{\'e} MF and Vandergheynst, Pierre},
  journal={Proceedings of the IEEE},
  volume={106},
  number={5},
  pages={808--828},
  year={2018},
  publisher={IEEE}
}

@article{huang2018graph,
  title={A graph signal processing perspective on functional brain imaging},
  author={Huang, Weiyu and Bolton, Thomas AW and Medaglia, John D and Bassett, Danielle S and Ribeiro, Alejandro and Van De Ville, Dimitri},
  journal={Proceedings of the IEEE},
  volume={106},
  number={5},
  pages={868--885},
  year={2018},
  publisher={IEEE}
}

@article{hu2016matched,
  title={Matched signal detection on graphs: Theory and application to brain imaging data classification},
  author={Hu, Chenhui and Sepulcre, Jorge and Johnson, Keith A and Fakhri, Georges E and Lu, Yue M and Li, Quanzheng},
  journal={NeuroImage},
  volume={125},
  pages={587--600},
  year={2016},
  publisher={Elsevier}
}

@article{ganmor2011sparse,
  title={Sparse low-order interaction network underlies a highly correlated and learnable neural population code},
  author={Ganmor, Elad and Segev, Ronen and Schneidman, Elad},
  journal={Proceedings of the National Academy of sciences},
  volume={108},
  number={23},
  pages={9679--9684},
  year={2011},
  publisher={National Acad Sciences}
}

@article{yu2011higher,
  title={Higher-order interactions characterized in cortical activity},
  author={Yu, Shan and Yang, Hongdian and Nakahara, Hiroyuki and Santos, Gustavo S and Nikoli{\'c}, Danko and Plenz, Dietmar},
  journal={Journal of neuroscience},
  volume={31},
  number={48},
  pages={17514--17526},
  year={2011},
  publisher={Soc Neuroscience}
}

@article{ohiorhenuan2010sparse,
  title={Sparse coding and high-order correlations in fine-scale cortical networks},
  author={Ohiorhenuan, Ifije E and Mechler, Ferenc and Purpura, Keith P and Schmid, Anita M and Hu, Qin and Victor, Jonathan D},
  journal={Nature},
  volume={466},
  number={7306},
  pages={617--621},
  year={2010},
  publisher={Nature Publishing Group}
}

@book{newman2018networks,
  title={Networks},
  author={Newman, Mark},
  year={2018},
  publisher={Oxford university press}
}

@article{bullmore2009complex,
  title={Complex brain networks: graph theoretical analysis of structural and functional systems},
  author={Bullmore, Ed and Sporns, Olaf},
  journal={Nature reviews neuroscience},
  volume={10},
  number={3},
  pages={186--198},
  year={2009},
  publisher={Nature Publishing Group}
}

@article{giusti2016two,
  title={Two’s company, three (or more) is a simplex},
  author={Giusti, Chad and Ghrist, Robert and Bassett, Danielle S},
  journal={Journal of computational neuroscience},
  volume={41},
  number={1},
  pages={1--14},
  year={2016},
  publisher={Springer}
}

@article{andjelkovic2020topology,
  title={The topology of higher-order complexes associated with brain hubs in human connectomes},
  author={Andjelkovi{\'c}, Miroslav and Tadi{\'c}, Bosiljka and Melnik, Roderick},
  journal={Scientific reports},
  volume={10},
  number={1},
  pages={1--10},
  year={2020},
  publisher={Nature Publishing Group}
}

@article{emrani2014persistent,
  title={Persistent homology of delay embeddings and its application to wheeze detection},
  author={Emrani, Saba and Gentimis, Thanos and Krim, Hamid},
  journal={IEEE Signal Processing Letters},
  volume={21},
  number={4},
  pages={459--463},
  year={2014},
  publisher={IEEE}
}

@article{emrani2015novel,
  title={A novel framework for pulse pressure wave analysis using persistent homology},
  author={Emrani, Saba and Saponas, T Scott and Morris, Dan and Krim, Hamid},
  journal={IEEE Signal Processing Letters},
  volume={22},
  number={11},
  pages={1879--1883},
  year={2015},
  publisher={IEEE}
}

@book{rosenberg1997laplacian,
  title={The Laplacian on a Riemannian manifold: an introduction to analysis on manifolds},
  author={Rosenberg, Steven and Steven, Rosenberg},
  number={31},
  year={1997},
  publisher={Cambridge University Press}
}

@article{morgan1974embedding,
  title={Embedding metric spaces in Euclidean space},
  author={Morgan, CL},
  journal={Journal of Geometry},
  volume={5},
  number={1},
  pages={101--107},
  year={1974},
  publisher={Springer}
}

@inproceedings{schaub2018flow,
  title={Flow smoothing and denoising: Graph signal processing in the edge-space},
  author={Schaub, Michael T and Segarra, Santiago},
  booktitle={2018 IEEE Global Conference on Signal and Information Processing (GlobalSIP)},
  pages={735--739},
  year={2018},
  organization={IEEE}
}

@article{barbarossa2020topological,
  title={Topological signal processing over simplicial complexes},
  author={Barbarossa, Sergio and Sardellitti, Stefania},
  journal={IEEE Transactions on Signal Processing},
  volume={68},
  pages={2992--3007},
  year={2020},
  publisher={IEEE}
}

@article{edelsbrunner2008persistent,
  title={Persistent homology-a survey},
  author={Edelsbrunner, Herbert and Harer, John and others},
  journal={Contemporary mathematics},
  volume={453},
  pages={257--282},
  year={2008},
  publisher={Providence, RI: American Mathematical Society}
}

@inproceedings{chung2019statistical,
  title={Statistical inference on the number of cycles in brain networks},
  author={Chung, Moo K and Huang, Shih-Gu and Gritsenko, Andrey and Shen, Li and Lee, Hyekyoung},
  booktitle={2019 IEEE 16th International Symposium on Biomedical Imaging (ISBI 2019)},
  pages={113--116},
  year={2019},
  organization={IEEE}
}

@article{park2013structural,
  title={Structural and functional brain networks: from connections to cognition},
  author={Park, Hae-Jeong and Friston, Karl},
  journal={Science},
  volume={342},
  number={6158},
  year={2013},
  publisher={American Association for the Advancement of Science}
}

@article{tarjan1972depth,
  title={Depth-first search and linear graph algorithms},
  author={Tarjan, Robert},
  journal={SIAM journal on computing},
  volume={1},
  number={2},
  pages={146--160},
  year={1972},
  publisher={SIAM}
}

@article{anand2021hodge,
  title={Hodge-Laplacian of Brain Networks and Its Application to Modeling Cycles},
  author={Anand, D Vijay and Chung, Moo K},
  journal={arXiv preprint arXiv:2110.14599},
  year={2021}
}

@article{zomorodian2005computing,
  title={Computing persistent homology},
  author={Zomorodian, Afra and Carlsson, Gunnar},
  journal={Discrete \& Computational Geometry},
  volume={33},
  number={2},
  pages={249--274},
  year={2005},
  publisher={Springer}
}

@inproceedings{busaryev2012annotating,
  title={Annotating simplices with a homology basis and its applications},
  author={Busaryev, Oleksiy and Cabello, Sergio and Chen, Chao and Dey, Tamal K and Wang, Yusu},
  booktitle={Scandinavian workshop on algorithm theory},
  pages={189--200},
  year={2012},
  organization={Springer}
}

@article{dakurahregistration,
  title={Registration and Joint Identification of Cycles in Brain Networks},
  author={Dakurah, Sixtus and Chung, Moo K},
  journal={Preprint},
  year={2023}
}

@inproceedings{dakurah2022modelling,
  title={Modelling cycles in brain networks with the Hodge Laplacian},
  author={Dakurah, Sixtus and Anand, D Vijay and Chen, Zijian and Chung, Moo K},
  booktitle={International Conference on Medical Image Computing and Computer-Assisted Intervention},
  pages={326--335},
  year={2022},
  organization={Springer}
}

@article{dakurah2025maxtda,
  title={MaxTDA: Robust Statistical Inference for Maximal Persistence in Topological Data Analysis},
  author={Dakurah, Sixtus and Cisewski-Kehe, Jessi},
  journal={arXiv preprint arXiv:2504.03897},
  year={2025}
}

@book{dakurah2025robust,
  title={Robust Statistical Methods for Topological Data Analysis of Time Series},
  author={Dakurah, Sixtus},
  year={2025},
  publisher={The University of Wisconsin-Madison}
}

@article{dakurah2024subsequence,
  title={A subsequence approach to topological data analysis for irregularly-spaced time series},
  author={Dakurah, Sixtus and Cisewski-Kehe, Jessi},
  journal={arXiv preprint arXiv:2410.13723},
  year={2024}
}

@article{tzourio2002automated,
  title={Automated anatomical labeling of activations in SPM using a macroscopic anatomical parcellation of the MNI MRI single-subject brain},
  author={Tzourio-Mazoyer, Nathalie and Landeau, Brigitte and Papathanassiou, Dimitri and Crivello, Fabrice and Etard, Olivier and Delcroix, Nicolas and Mazoyer, Bernard and Joliot, Marc},
  journal={Neuroimage},
  volume={15},
  pages={273--289},
  year={2002},
  publisher={Elsevier}
}

@article{ramaekers2022functional,
  title={Functional brain connectomes reflect acute and chronic cannabis use},
  author={Ramaekers, Johannes G and Mason, Natasha L and T{\"o}nnes, Stefan W and Theunissen, Eef L and Amico, Enrico},
  journal={Scientific reports},
  volume={12},
  number={1},
  pages={2449},
  year={2022},
  publisher={Nature Publishing Group UK London}
}

@article{wall2019dissociable,
  title={Dissociable effects of cannabis with and without cannabidiol on the human brain’s resting-state functional connectivity},
  author={Wall, Matthew B and Pope, Rebecca and Freeman, Tom P and Kowalczyk, Oliwia S and Demetriou, Lysia and Mokrysz, Claire and Hindocha, Chandni and Lawn, Will and Bloomfield, Michael AP and Freeman, Abigail M and others},
  journal={Journal of Psychopharmacology},
  volume={33},
  number={7},
  pages={822--830},
  year={2019},
  publisher={Sage Publications Sage UK: London, England}
}

\end{document}